\pgfplotsset{compat=1.13} 
\newcommand{\textkom}{\text{,}}
\newcommand{\txtfor}[1]{\text{$#1$}}
\newcommand{\tp}{\textnormal{\textsf{T}}}
\newcommand{\hp}{\textnormal{\textsf{H}}}
\newcommand{\set}[1]{\mathcal{#1} }
\renewcommand{\vec}[1]{\mbox{\boldmath{$#1$}}}
\newcommand{\mat}[1]{\mbox{\boldmath{$#1$}}}
\newcommand{\rank}[1]{\text{rank}\left(#1\right)}
\newcommand{\dop}{\text{ d}}
\newcommand{\cd}{(\cdot)}
\DeclareMathAlphabet\mathbb{U}{fplmbb}{m}{n}
\newcommand{\RZ}{\mathbb{R}}
\newcommand{\NZ}{\mathbb{N}}
\newcommand{\CZ}{\mathbb{C}}
\renewcommand{\Re}[1]{\mathfrak{Re}\{#1\}}
\def \oa {u}
\def \ua {y}
\def \zero {\vec{0}}                    
\def \I {\mat{I}}                       
\def \x {\vec{x}}                       
\def \xol {\overline{\x}}               
\def \xoli {\overline{x}}               
\def \u {\vec{u}}                       
\def \y {\vec{y}}                       
\def \yol {\overline{\y}}               
\def \xs {\vec{x}_\text{s}}             
\def \us {\vec{u}_\text{s}}             
\def \ys {\vec{y}_\text{s}}             
\def \zs {\vec{z}_\text{s}}             
\def \zh {\vec{z}_\text{H}}             
\def \zsi {\vec{z}_{\text{s},i}}             
\def \zhi {\vec{z}_{\text{H},i}}             
\newcommand{\dx}{\delta \x}				
\newcommand{\du}{\delta \u}				
\newcommand{\dJ}[1]{\delta^{#1}J}				
\newcommand{\phiv}{\vec{\phi}_\text{s}} 
\newcommand{\betav}{\vec{\beta}} 		
\def \A {\mat{A}}                       
\def \Aol {\overline{\A}}               
\def \lambdaol {\overline{\lambda}}  	
\def \B {\mat{B}}                       
\def \C {\mat{C}}                       
\def \Col {\overline{\C}}				
\def \D {\mat{D}}                 		
\def \Ed {\mat{E}_\text{d}}                 	
\def \Dd {\mat{D}_\text{d}}                 	
\def \mPi {\mat{\Pi}}                   
\def \mPic {\mat{\Pi}_{\phi}}           
\def \mGamma {\mat{\Gamma}}             
\def \mGammac {\mat{\Gamma}_\phi}             
\def \dPi {\delta\mPi}                   
\def \dGamma {\delta\mGamma}                   
\def \sdPi {\delta\vec{\pi}}
\def \mA {\mat{\set{A}}} 				
\def \mAol {\overline{\mat{\set{A}}}}
\def \DA {\mat{\Delta}} 			
\def \DAol {\overline{\mat{\Delta}}}
\def \vx {\mat{\set{X}}} 					
\def \tvx {\widetilde{\mat{\set{X}}}} 		
\def \vb {\mat{\set{B}}}  					
\def \Q {\mat{Q}}                       
\def \Qy {\mat{Q}}
\def \R {\mat{R}}                       
\def \RGoa {(\set{RE}^\oa)}				
\def \RGua {(\set{RE}^\ua)}				
\def \RG {(\set{RE})}					
\newcommand{\Ros}[1]{\vec{\set{R}}\left(#1\right)}		
\newcommand{\Ross}[1]{\vec{\set{R}}_\text{s}\left(#1\right)}		
\newcommand{\Rosc}[1]{\vec{\set{R}}_\text{c}\left(#1\right)}		
\newcommand{\Rosh}[1]{\vec{\set{R}}^\oa_\text{H}\left(#1\right)}		
\newcommand{\Roshu}[1]{\vec{\set{R}}^\ua_\text{H}\left(#1\right)}		
\def \s {\text{s}}				
\def \ct {\text{c}}				
\def \nol {\overline{n}}                
\def \K {\mat{K}}                       
\def \G {\mat{G}}            
\def \W {\mat{W}}
\def \M {\mat{M}}            
\def \nulls {\text{null}}
\def \lnulls {\text{leftnull}\left(\Ross{\lambdaol}\right)}
\def \lnullh {\text{leftnull}\left(\Rosh{\lambdaol}\right)}
\newcommand{ \ma}{\begin {bmatrix}}
\newcommand{\me }{ \end {bmatrix}}
\newcommand{ \mra}{\begin {pmatrix}}
\newcommand{\mre }{ \end {pmatrix}}
\newcommand{\thmref}[1]{Theorem~\ref{#1}}
\newcommand{\secref}[1]{Section~\ref{#1}}
\newcommand{\assref}[1]{Asmp.~\ref{#1}}
\newcommand{\remref}[1]{Remark~\ref{#1}}
\newcommand{\lemref}[1]{Lemma~\ref{#1}}
\newcommand{\appref}[1]{Appendix~\ref{#1}}
\newcommand{\lqtref}[1]{LQTP~\ref{#1}}
\newcommand{\orref}[1]{ORP~\ref{#1}}
\newcommand{\oorref}[1]{OORP~\ref{#1}}
\newtheorem{lemma}{Lemma}
\newtheorem{theorem}{Theorem}
\newtheorem{coroll}{Corollary}
\newtheorem{defn}{Definition}
\newtheorem{orprob}{Output Regulation Problem}
\newtheorem{oorprob}{Optimal ORP}
\newtheorem{lqtprob}{LQT Problem}
\def\BibTeX{{\rm B\kern-.05em{\sc i\kern-.025em b}\kern-.08em
    T\kern-.1667em\lower.7ex\hbox{E}\kern-.125emX}}
\begin{document}

\def\proof{\@ifnextchar[{\@proof}{\@proof[Proof]}}
\def\@proof[#1]{\noindent\hspace{2em}{\itshape #1: }}

\def \QED {\IEEEQED}	
\def\endproof{\hspace*{\fill}~\QED\par\endtrivlist\unskip}
	
\title{Optimal Output Regulation for Square, Over-Actuated and Under-Actuated Linear Systems}
\author{Sebastian Bernhard and J\"urgen Adamy
	\thanks{S. Bernhard and J. Adamy are with the Institute of Automatic Control and Mechatronics, Control Methods and Robotics Lab;
		Technische Universit\"at Darmstadt, Landgraf-Georg Str. 4, 64283 Darmstadt, Germany,
		{\tt \{bernhard, adamy\}@rmr.tu-darmstadt.de}}
}

\maketitle

\begin{abstract}
	This paper considers two different problems in trajectory tracking control for linear systems. First, if the control is not unique which is most input energy efficient. Second, if exact tracking is infeasible which control performs most accurately. These are typical challenges for over-actuated systems and for under-actuated systems, respectively. We formulate both goals as optimal output regulation problems. Then we contribute two new sets of regulator equations to output regulation theory that provide the desired solutions. A thorough study indicates solvability and uniqueness under weak assumptions. E.g., we can always determine the solution of the classical regulator equations that is most input energy efficient. This is of great value if there are infinitely many solutions. We derive our results by a linear quadratic tracking approach and establish a useful link to output regulation theory.
\end{abstract}

\begin{IEEEkeywords}
	 Trajectory tracking control, output regulation, regulator equations,  over-actuation, under-actuation, linear quadratic, optimal tracking, infinite horizon.
 \end{IEEEkeywords}
 \IEEEpeerreviewmaketitle
 
 \section{Introduction}
\label{sec_intro}
In many practical applications it is desired that the system output tracks a time-varying reference. In a quite general setting, output regulation theory states conditions when exact tracking with zero steady state error is possible and provides a simple way to calculate and to implement a control that achieves it, see \cite{Huang2004,Isidori2017,Saberi2000,Trentel2001}. 

However, such a solution is not unique for over-actuated processes as, e.g., hybrid electric vehicles \cite{Sciarretta2007}. Then it is of great interest, how this surplus of actuators can be used beneficially. Many publications consider this question, e.g., \cite{Krener1992,Moase2016, Serrani2012}. 
The converse problem, when exact tracking is impossible, also drew attention in recent years, see \cite{  Corona2018b,Davison2011,Garcia2011,Saberi2000}. The question arises: Which control yields the highest tracking accuracy? An appropriate answer is of great value with respect to under-actuated systems such as underwater vehicles \cite{Aguiar2007} or square systems affected by actuator failures \cite{Tsiotras2000b}.

In this paper, we establish a connection between both questions and an optimal tracking problem. In this way, we are able to give answers that are surprisingly concise and universal at the same time. Whereas results in the literature are often complex, our approach preserves the simplicity in the control structure and the design. Hence, it qualifies as a natural extension to output regulation theory for over-actuated systems as well as a natural bridge to under-actuated systems.

\subsection{Problem Formulation and Main Contribution}
\label{sec_contr}
We consider linear time-invariant systems of the form%
\begin{subequations}\label{eq_sys}\begin{align} 
	\dot{\x} &= \A\x+\B\u+\Ed \xol\textkom \label{eq_linsys}\\
	\y &= \C\x+\Dd \xol \label{eq_y}
	\end{align}\end{subequations}
where $\x(t)\in\RZ^n$ is the state with initial value $\x(0)=\x_0$, $\u(t)\in\RZ^m$ is the input and $\y(t)\in\RZ^p$ is the output for $t\in[0,\infty)$. We will consider a feedthrough $\D\u$ to the output later on. For now, suppose $\D=\zero$. The system is affected by state disturbances $\Ed \xol$ and output disturbances $\Dd \xol$. Together with the reference output trajectory $\yol(t)\in\RZ^{p}$, these are generated by an exosystem%
\begin{subequations}\label{eq_exo}\begin{align} 
	\dot{\xol} &= \Aol \xol\textkom \label{eq_exosys}\\ 
	\yol &= \Col \xol \label{eq_yol}
	\end{align}\end{subequations} 
with state $\xol(t)\in\RZ^{\nol}$ and initial value $\xol(0)=\xol_0$. The exosystem is usually block-diagonal in order to account for both tasks.

Then it is desired that the output \eqref{eq_y} tracks the reference \eqref{eq_yol} or, more precisely, we want to solve%
\begin{orprob}[ORP 1]\label{or}
	Find the matrices $\mPi\in\RZ^{n\times\nol}$, $\mGamma\in\RZ^{m\times\nol}$ and $\K\in\RZ^{m\times n}$ for which the control 
	\begin{equation}\label{eq_u}
		\u=-\K(\x-\mPi\xol)+\mGamma\xol
	\end{equation} 
	guarantees that the tracking error $\y-\yol$ is regulated such that
	\begin{equation}\label{eq_error}
		\lim_{t\to\infty}\y(t)-\yol(t)=\zero
	\end{equation} 
	holds for all $\x_0$ and $\xol_0$.
\end{orprob}

To solve this problem, the choice of $\mPi$ and $\mGamma$ is essential. Assume that system \eqref{eq_sys} is stabilizable and all eigenvalues of $\Aol$ lie in the closed right half-plane. Then, it is a well known result by \cite{Francis1977} that \orref{or} can be solved by a linear (dynamic) control law such as \eqref{eq_u} \textit{if and only if} a solution $(\mPi,\mGamma)$ to the \textit{classical regulator equations}
\begin{subequations}\label{eq_reg}\begin{align}[left = \RG\enspace\empheqlbrace]
	\mPi\Aol&=\A\mPi+\B\mGamma+\Ed \label{eq_reg1}\\
	\Col&=\C\mPi+\Dd \label{eq_reg2}
\end{align}\end{subequations}  
exists. Indeed, when we choose a feedback $-\K\x$ that stabilizes system \eqref{eq_sys}, then the state converges to its so-called \textit{stationary state} $\xs(t)\coloneqq\mPi\xol(t)$, i.e., $\lim_{t\to\infty}\x(t)-\mPi\xol(t)=\zero$. It is induced by the stationary control $\us(t)\coloneqq\mGamma\xol(t)$ in \eqref{eq_u} and the disturbance $\Ed\xol$ in \eqref{eq_linsys}. With respect to \eqref{eq_exosys}, we call $\big(\xs\cd,\us\cd\big)$ a \textit{stationary solution} of \eqref{eq_linsys}. As a result of \eqref{eq_reg2}, the tracking error vanishes asymptotically since the stationary output $\ys\coloneqq\C\xs+\Dd\xol$ satisfies $\ys-\yol\equiv\zero$. 

Considering \orref{or}, we are motivated by the following questions that may arise when we intend to solve the regulator equations:
\begin{enumerate}
	\item If their solution is not unique, which choice of $(\mPi,\mGamma)$ gives the control $\us$ that is most input energy efficient? \label{ques1}
	\item If $\RG$ cannot be solved, what control $\us$ should be chosen in order to keep the nonzero tracking error $\ys-\yol$ small?\label{ques2}%
\end{enumerate}
Question \ref{ques1}) arises in the context of \textit{over-actuated} system \eqref{eq_sys} for which $\rank{\B}>\rank{\C}$. Then we are of course interested in a solution of \orref{or} that uses the additional actuators beneficially. We have to face question \ref{ques2}) if \orref{or} is infeasible and nonzero tracking errors are unavoidable. This is typical for \textit{under-actuated} system \eqref{eq_sys} with $\rank{\B}<\rank{\C}$. 
Now, we reformulate the open questions by two optimal output regulation problems. These state reasonable goals for over- and under-actuated systems, respectively.
\begin{oorprob}[OORP 1]\label{oor}
	Find a pair $(\mPi^\oa,\mGamma^\oa)$ such that for \textit{every} $\xol_0$, $\big(\xs^*(\cdot),\us^*(\cdot)\big)=\big(\mPi^\oa\xol(\cdot),\mGamma^\oa\xol(\cdot)\big)$ is a stationary solution of \eqref{eq_linsys} that minimizes the power $P^u\big(\xs(\cdot),\us(\cdot)\big)\coloneqq\lim_{T\to\infty}\left(\frac{1}{T} J_T^u\big(\xs(\cdot),\us(\cdot)\big)\right)$ w.r.t. to the stationary input energy
	\begin{equation}\label{eq_Ju}
	J_T^u\big(\xs(\cdot),\us(\cdot)\big)=\tfrac{1}{2}\int_0^T \us(t){^\tp}\R\us(t) \dop t
	\end{equation}
	with $\R$ real symmetric positive definite
	under the constraint 
	\begin{equation}\label{eq_constr}
	\ys(t)=\Col\xs(t)+\Dd\xol(t)=\Col\xol(t)\textkom\;\; \forall t\in [0,\infty).
	\end{equation}
\end{oorprob}

\vspace{3mm}
We discuss the expansion of cost \eqref{eq_Ju} by a state penalty later on.
\begin{oorprob}[OORP 2]\label{uor}
	Find a pair $(\mPi^\ua,\mGamma^\ua)$ such that for \textit{every} $\xol_0$, $\big(\xs^*(\cdot),\us^*(\cdot)\big)=\big(\mPi^\ua\xol(\cdot),\mGamma^\ua\xol(\cdot)\big)$ is a stationary solution of \eqref{eq_linsys} that minimizes the power $P^y\big(\xs(\cdot),\us(\cdot)\big)\coloneqq\lim_{T\to\infty}\left(\frac{1}{T} J_T^y\big(\xs(\cdot),\us(\cdot)\big)\right)$ with stationary error energy
	\begin{equation}\label{eq_Jy}
	\hspace{-5mm}J_T^y\big(\xs(\cdot),\us(\cdot)\big)=\tfrac{1}{2}\int_0^T \big(\ys(t)-\yol(t)\big){^\tp}\Qy\big(\ys(t)-\yol(t)\big) \dop t\hspace{-3mm}
	\end{equation} 
	with $\Qy$ real symmetric positive definite.
\end{oorprob}

In fact, both problems are linear quadratic tracking (LQT) problems over infinite horizons $T\rightarrow\infty$ where the admissible solutions are restricted to be stationary solutions of \eqref{eq_linsys}. We consider an infinite horizon because our focus lies on the stationary behavior \eqref{eq_constr} rather than on the transient response. 
In more detail, \oorref{oor} comprises a state constraint \eqref{eq_constr}. Such kind of ``pure'' state constrained LQT problems are difficult to solve, e.g., see \cite{Bryson1975} and \cite{Grass2008}. 
As for cost \eqref{eq_Jy}, we observe that it is independent of the control $\us(\cdot)$ and, hence, \oorref{uor} is a singular LQT problem. Singular optimal control problems are more complicated in general and rather considered as regulator problems than as tracking problems, e.g., \cite{Anderson2007}, \cite{Bryson1975},  \cite{Grass2008}, \cite{Pratti2008}.%

To the best of our knowledge, neither of the two challenging \textit{optimal output regulation problems} has been rigorously solved in such a general setup yet. In particular, a solution method as simple as solving algebraic equations such as $\RG$ is \textit{not known} for either of the two problems in general as we will discuss in the next section. 

Beforehand, we highlight our \textbf{main contributions}:\\ 
Using a \textit{unifying} linear quadratic tracking approach for the first time, we give rigorous solutions to both optimal output regulation problems \ref{oor} and \ref{uor} under suitable assumptions. For each, we derive \textit{new regulator equations} that provide the desired solution in the most simple and natural way. We prove optimality and study conditions for solvability and uniqueness in detail. Putting our results in a wider context, we verify that the solution of each problem \orref{or}, \oorref{oor} and \ref{uor} is the limit of the solution of a special case of a classical linear quadratic tracking problem over an infinite horizon. Concluding, we contribute to output regulation theory by answering questions \ref{ques1}) and \ref{ques2}) thoroughly and by giving a natural extension to account for over-actuated and under-actuated systems in a general manner.

\subsection{State of the Art}
With respect to question \ref{ques1}), problems that are very similar to \oorref{oor} are studied in the literature when system \eqref{eq_sys} is over-actuated. 

In \cite{Krener1992}, the author claims that a pair $(\mPi,\mGamma)$ should be chosen that solves a proposed parametric optimization problem (OP) with constraints $\RG$. It was shown in \cite{Bernhard2018a} that the solution of this OP \textit{without} \eqref{eq_reg2} is connected to the optimal solution of an LQT problem. However, under constraint \eqref{eq_reg2}, it has not been proven yet that the obtained pair is optimal for every $\xol_0$. Nonetheless, this OP is frequently used in recent approaches, cf. \cite{Gao2016}, \cite[Sec. V-A]{Serrani2012}. Another parametric OP is introduced in the context of an optimal servo-compensator design in \cite{Moase2016}. The obtained $(\mPi\xol,\mGamma\xol)$ minimizes the power of the expected value of a cost similar to \eqref{eq_Ju} over a given distribution of $\xol_0$. Again, this implies by no means that the power is minimized for every $\xol_0$. In addition, the required computations by this approach are very involved, cf. \cite[Sec. 4.1]{Moase2016}. Without additional proofs, it cannot be conclude that either of the two approaches gives a suitable solution to our \oorref{oor}.

A different approach aims at deriving explicit degrees of freedom (DOF) that influence the state $\x$ but do not affect the output $\y$. Similarly to \textit{control allocation} (\cite{Haerkegard2005}), they are used for an online optimization during tracking. This idea originates from \cite{Zaccarian2009} and is generalized in \cite{Cristofaro2014}. With regard to output regulation, results are given in, e.g.,  \cite{Galeani2014}, \cite{Galeani2015} and \cite{Serrani2012}. Once these DOF are at hand, they shall be used beneficially. Then, one still faces an optimal tracking problem if a cost such as \eqref{eq_Ju} is considered. It is suggested in \cite{Cristofaro2014,Galeani2014} to solve such problems by an online and dynamic gradient descent as a part of a dynamic control strategy. For quadratic costs (e.g., \cite[Sec. VI]{Cristofaro2014}, \cite[Sec. V]{Galeani2014}), our proposed results could be used to avoid controller dynamics and the complications of making the DOF explicitly available. In \cite{Bernhard2017}, explicit DOF are obtained by a row-by-row decoupling control design and are used to explicitly solve an LQT problem with cost \eqref{eq_Ju}. If decoupling is not required, this causes unnecessary restrictions on the structure of the control and complications in its derivation.

In the literature, question \ref{ques2}) is usually considered in the context of the control design for under-actuated systems \eqref{eq_sys}.

Many contributions evaluate the achievable performance of regulating the output $\y$ either to an accessible reference $\yol$ (e.g., \cite{Chen2002}) or to the best $\ys$ with respect to the cost \eqref{eq_Jy} when $\Aol=\zero$ (e.g., \cite{Garcia2011}). For such constant $\yol$, \oorref{uor} is well understood since it is equivalent to a parametric OP stated in \cite{Willems2004}. This OP is also considered by \cite{Davison2011}.

For quasi-periodic references, \cite{Corona2018b} considers \oorref{uor} when the input energy \eqref{eq_Ju} with $\R\succ\zero$ is added to the cost \eqref{eq_Jy}. The calculation of the proposed control is rather involved and only valid for diagonal weights $\Q$, $\R$. The results may be extended to the case ${\R=\zero}$ which will however require additional assumptions. Our approach avoids these ambiguities and disadvantages.
 By using a cheap optimal control approach, \cite[Ch. 17]{Saberi2000} proves that a solution to \oorref{uor} exists. However, it is not shown how a solution pair $(\mPi^y,\mGamma^y)$ can actually be derived which is not obvious unfortunately. This is an essential part of our contribution.

To the best of our knowledge, we conclude that solutions to \oorref{oor} and \oorref{uor}, that hold under general assumptions and are derived as easily as in the case when \orref{or} has a unique solution, are not known yet.

\subsection{Outline}
The next section presents preliminaries that include definitions and basic assumptions, and we introduce a unifying linear quadratic tracking approach. Based on this, we derive our main results in \secref{sec_oor}, where we start with \oorref{oor}, proceed with \oorref{uor} and, eventually, bridge the gap to a classical infinite-horizon LQT problem. Before our final conclusions, we discuss extensions of our main results and additional findings in \secref{sec_disc}. For the reader's convenience, we shift some prior results and proofs to the appendix.

 \section{Preliminaries}
\label{sec_prelim}

\subsection{Mathematical Notations}
\label{sec_math}
The real part of a complex number $c$ is $\Re{c}$ and we write $c\in j\RZ$ if $\Re{c}=0$. The zero matrix $\zero$ and identity matrix $\I$ have appropriate dimensions.
A matrix $\M$ is symmetric positive (semi)definite if $\M\succ(\succeq)\;\zero$. Its transpose is $\M^\tp$ and its spectrum is $\sigma(\M)$.  The conjugate transpose of a complex matrix $\M\in\CZ^{a\times b}$ is $\M^\hp$. We define the nullspace by $\nulls(\M)=\{\vec{\nu}\in\CZ^{b}\,\vert\,\zero=\M\vec{\nu}\}$ and the left nullspace by $\text{leftnull}(\M)=\{\vec{\nu}\in\CZ^{a}\,\vert\,\zero=\vec{\nu}^\hp\M\}$. We denote the (induced) $2$-norm by $\Vert\cdot\Vert_2$ and the Frobenius norm by $\Vert\cdot\Vert_\text{F}$. We use the big $\mathcal{O}$ notation: $f(\epsilon)=\mathcal{O}(g(\epsilon))$ as $\epsilon \to 0$ if and only if $\exists \alpha, \epsilon_0>0$ such that $\vert f(\epsilon)\vert \leq \alpha\vert g(\epsilon)\vert$ if $\vert \epsilon \vert < \epsilon_0$. 
For the ease of presentation, functions $f(x,y,\ldots)$ are often abbreviated by $f\cd$ if their arguments $x,y,\ldots$ are clear from context. Also, the dependence of a variable $x(t)$ on its argument time $t$ is often dropped. A variation of a possibly vector-valued function $\x\cd$ is denoted by $\dx\cd$. The $i$-th variation is written as $\delta^i\x\cd$ with $i\in\NZ$. For a system $(\C,\A,\B,\D)$ given by \eqref{eq_sys}, the \textit{Rosenbrock system matrix} is%
\begin{equation*}
\Ros{s,\C,\A,\B,\D}\coloneqq\ma s\I-\A & -\B\\
\C & \D \me
\end{equation*}
where $s\in\CZ$. For $\D=\zero$, we abbreviate $\Ros{s,\C,\A,\B}$.

\subsection{Definitions and Basic Assumptions}
\label{sec_def}

The following definitions of admissible solutions and optimality in a stationary sense are important.
\begin{defn}[Admissible solutions]
	Every stationary solution of \eqref{eq_linsys}, that is every pair $\big(\xs(t),\us(t)\big)=(\mPi\xol(t),\mGamma\xol(t))$ where $(\mPi, \mGamma)\in\RZ^{n\times\nol}\times\RZ^{m\times\nol}$ solves \eqref{eq_reg1}, is admissible.
\end{defn}
\begin{defn}[Optimality] \label{def_opt}
		For a given $\xol_0$ and the power $P\big(\xs(\cdot),\us(\cdot)\big)\coloneqq\lim_{T\to\infty}\left(\frac{1}{T} J_T\big(\xs(\cdot),\us(\cdot)\big)\right)$ w.r.t. the cost $J_T(\cdot)$, an admissible solution $\big(\xs^*(\cdot),\us^*(\cdot)\big)$ is \textit{optimal} if for every admissible solution $\big(\xs(\cdot),\us(\cdot)\big)$, the difference of powers satisfies%
		\begin{equation}\label{eq_delta_power}
			\Delta P\cd\coloneqq P\big(\xs(\cdot),\us(\cdot)\big)-P\big(\xs^*(\cdot),\us^*(\cdot)\big)\geq 0.
		\end{equation}
		If $\Delta P(\cdot)>0$ is true for all $\big(\xs(\cdot),\us(\cdot)\big)\not\equiv\big(\xs^*(\cdot),\us^*(\cdot)\big)$, then $\big(\xs^*(\cdot),\us^*(\cdot)\big)$ is \textit{unique}.
\end{defn}

 \remark The choice of the powers $P^u(\cdot)$ and $P^y(\cdot)$ as performance indices is beneficial for a stationary analysis: Under standard assumptions, both are well-defined in contrast to the costs \eqref{eq_Ju} and \eqref{eq_Jy} for $T\to\infty$ (cf. \cite{Anderson2007,Artstein1985,Bernhard2017b}); the asymptotic transition to the stationary state is disregarded; and any difference in power implies a difference in cost that grows linearly with time $T$ which is a strong optimality property over infinite horizons. For these reasons, the power concept is used in \cite{Corona2018b,Moase2016} and \cite{Saberi2000}, too.
 \remark Due to the framework of output regulation theory, we only consider stationary solutions here. 
 However, this restriction is not substantial in our case as we will discuss in \secref{sec_var}. There we will discuss that our results satisfy a strong overtaking property \cite{Artstein1985} even if more general admissible solutions are considered.

Throughout this paper, our basic assumptions are:
\assumption The pair $(\A,\B)$ is stabilizable. \label{ass_stab}
\assumption For all $\lambdaol \in \sigma\left(\Aol\right)$ it holds $\Re{\lambdaol}=0$ and the algebraic and geometric multiplicities are equal.\label{ass_eig}%

By the first assumption, we may assume that the feedback $-\K\x$ in control \eqref{eq_u} stabilizes \eqref{eq_linsys} and that the state converges from all $\x_0$ to a desired stationary state $\xs(\cdot)$ on which we focus in the sequel. 

The second assumption is important. It is necessary for ensuring that references and disturbances are bounded for all $\xol_0$ which are thus constant and (quasi)periodic. Only then the powers $P^u(\cdot)$ and $P^y(\cdot)$ are bounded in general and it is easy to see that the limit defining them exists. 
Anyhow, asymptotically stable dynamics in \eqref{eq_exosys} are of no interest since they do not contribute to the stationary behavior, cf. \cite[Sec. 9.1]{Trentel2001}. To exclude polynomial and exponentially unstable dynamics is also reasonable as we will discuss in \secref{sec_unstable}. 
We emphasize that \assref{ass_eig} or even stricter assumptions are standard in the relevant literature, e.g., \cite{Corona2018b}, \cite{Galeani2015}, \cite{Krener1992}, \cite[Ch. 17]{Saberi2000} and \cite{Serrani2012}.%

\subsection{A Unifying Linear Quadratic Tracking Approach}
\label{sec_lqt}
In this section, we present a unifying linear quadratic tracking (LQT) approach.  
In \secref{sec_oor}, the techniques developed here will serve to find solution candidates to both, \oorref{oor} and \ref{uor}, and to prove the main results. We consider the
\begin{lqtprob}\label{lqt}
For a given $\xol_0$, find an admissible solution $\big(\xs^*(\cdot),\us^*(\cdot)\big)$ that minimizes $P(\cdot)=\lim_{T\to\infty}\frac{1}{T} J_T\cd$ w.r.t.%
	\begin{equation}\label{eq_Jlqt}
	J_T(\cdot)=\tfrac{1}{2}\int_0^T (\ys-\yol){^\tp}\rho\Qy(\ys-\yol)+\us{^\tp}\epsilon\R\us \dop t
	\end{equation}
	where $\rho>0$ and $\epsilon>0$ as wells as $\Q\succ \zero$ and $\R\succ\zero$.
\end{lqtprob}

We formulated \lqtref{lqt} in accordance with the problems in \secref{sec_contr}. For $\rho>0$ and $\epsilon>0$, a solution can be constructed as in, e.g., \cite{Bernhard2017b}, \cite{Kreindler1969}, and its optimality follows from the results in \cite{Bernhard2017b}, \cite{Artstein1985} under weak assumptions. It balances the error energy versus the input energy based on the cost \eqref{eq_Jlqt} and, thus, depends on the introduced parameters $\rho$, $\epsilon$. In \secref{sec_rellqt}, we will show that \oorref{oor} and \ref{uor} are special cases of \lqtref{lqt} as $\rho\rightarrow\infty$ and $\epsilon\rightarrow0$, respectively.

At this point, we want to obtain conditions that a solution candidate $(\xs^*\cd,\us^*\cd)$ to \lqtref{lqt} should satisfy. In this respect, we use the \textit{calculus of variations}, see \cite[Ch. 5]{Athans1966} and \cite[Ch. 2]{Bryson1975} for details on the developments of this section. These techniques will essentially help us to derive and to prove our main results.

For every admissible solution $(\xs,\us)$ and every $\gamma\in\RZ$, there exists a \textit{stationary variation} $(\dx,\du)\coloneqq(\dPi\xol,\dGamma\xol)$ where $(\dPi,\dGamma)$ satisfies $\dPi\Aol=\A\dPi+\B\dGamma$ such that $\xs(\cdot)=\xs^*(\cdot)+\gamma\dx(\cdot)$ and $\us(\cdot)=\us^*(\cdot)+\gamma\du(\cdot)$. This is easily verified since \eqref{eq_reg1} depends affinely on $\mPi$ and $\mGamma$.  
We emphasize that $\dx(0)\not=\zero$ in general. The $i$-th variation of a cost functional $J_T(\cdot)$ is defined by $$\dJ{i}_T(\dx,\du,\xs^*,\us^*)\coloneqq \tfrac{d^i J_T(\text{$\xs^*+\gamma\dx,\us^*+\gamma\du$})}{d\gamma^i}\Big\vert_{\gamma=0}.$$ Then $J_T\cd$ can be equivalently written as its Taylor series at ${\gamma=0}$ which reads
$J_T(\xs,\us)=J_T(\xs^*,\us^*)+\Delta J_T\cd$ with the \textit{cost difference} $\Delta J_T\cd$ given by
\begin{equation}\label{eq_diffJ}
\Delta J_T(\dx,\du,\gamma,\xs^*,\us^*)= \dJ{1}_T(\cdot)\gamma+\tfrac{1}{2}\dJ{2}_T(\cdot)\gamma^2.
\end{equation}
In the next sections, we will choose $\gamma = 1$ without loss of generality. 

In view of \eqref{eq_Jlqt}, we introduce the \textit{Hamiltonian} function
\begin{multline*}
H(\xs,\us,\phiv,\xol(t))\coloneqq\tfrac{1}{2}\left((\ys-\yol){^\tp}\rho\Qy(\ys-\yol)\right. \\
\left.+\us{^\tp}\epsilon\R\us\right)+\phiv^\tp\left(\A\xs+\B\us+\Ed \xol\right)
\end{multline*}
where the \textit{costate} $\phiv(t)\colon[0,\infty)\rightarrow\RZ^n$ is some arbitrary function for now.  By using integration by parts, we are able to rewrite \eqref{eq_Jlqt}: 
\begin{multline*}
\hspace{-2mm}J_T(\xs\cd,\us\cd)=\int_0^T \left(H(\xs,\us,\phiv,\xol)-\phiv^\tp\dot\x_\text{s}\right) \dop t
\\=\int_0^T \left(H(\cdot)+\dot\phiv^\tp\xs\right) \dop t+\phiv(0)^\tp\xs(0)-\phiv(T)^\tp\xs(T)\hspace{-2mm}
\end{multline*}
On the basis of this form of $J_T\cd$, the first variation of $J_T\cd$ in \eqref{eq_diffJ} is calculated:
\begin{multline*}
\hspace{-3mm}\dJ{1}_T\cd=\int_0^T \hspace{-1mm}\left(\tfrac{\partial H(\txtfor{\xs,\us,\cdot})}{\partial \txtfor{\xs}}\Big\vert_*\hspace{-1mm}+\dot\phiv \right)^\tp\hspace{-1mm}\dx+\tfrac{\partial H(\txtfor{\xs,\us,\cdot})}{\partial \txtfor{\us}}^\tp\Big\vert_*\du \dop t \\+\phiv(0)^\tp\dx(0)-\phiv(T)^\tp\dx(T)\textkom
\end{multline*}
where we used the notation: $f(\xs,\us,\cdot)\vert_*\coloneqq f(\xs^*,\us^*,\cdot)$. 
The second variation of $J_T\cd$ is directly obtained from \eqref{eq_Jlqt}:
\begin{equation}\label{eq_2dJ}
\dJ{2}_T\cd=\int_0^T \dx^\tp\C^\tp\rho\Qy\C\dx+\du^\tp\epsilon\R \du \dop t.
\end{equation}
As we see, only the first variation $\dJ{1}_T\cd$ depends on the candidate $\big(\xs^*\cd,\us^*\cd\big)$ and the costate $\phiv\cd$. 
Hence, we may choose $\big(\xs^*\cd,\us^*\cd\big)$ and $\phiv\cd$ such that
\begin{subequations}\label{eq_cond}\begin{align}
	0&=\left(\tfrac{\partial H(\txtfor{\xs,\us,\phiv,\xol})}{\partial \txtfor{\xs}}\Big\vert_*+\dot\phiv \right)^\tp\dx\textkom \label{eq_Hx}\\
	0&=\tfrac{\partial H(\txtfor{\xs,\us,\phiv,\xol})}{\partial \txtfor{\us}}^\tp\Big\vert_*\du \label{eq_Hu}
\end{align}\end{subequations}
hold for all $\big(\dx\cd,\du\cd\big)\not\equiv(\zero,\zero)$ and some $\phiv(0)\in\RZ^n$. Since neither an initial value $\xs(0)$ is given nor any transversality condition is available, $\dJ{1}_T\cd$ does not vanish by using \eqref{eq_cond}, but it reduces to%
\begin{equation} \label{eq_1dJ}
\hspace{-3mm}\dJ{1}_T\cd=\phiv(0)^\tp\dx(0)-\phiv(T)^\tp\dx(T).
\end{equation}
The coefficients of $\dx$ and $\du$ in \eqref{eq_cond} vanish if we choose the costate dynamics $\dot{\phiv}=-\A^\tp\phiv-\C^\tp\rho\Qy(\ys-\yol)$ and $\epsilon\R\us^*+\B^\tp\phiv=\zero$ as a constraint. The costate dynamics together with \eqref{eq_linsys} are called the \textit{Hamiltonian} system. Following \cite{Bernhard2018a} and \cite{Bernhard2017b}, the costate dynamics and this constraint are satisfied for $\big(\xs^*,\us^*)=(\mPi\xol,\mGamma\xol)$ and the approach $\phiv\cd=\mPic\xol\cd$ if the triple $(\mPi,\mPic,\mGamma)$ solves \eqref{eq_reg1} and
\begin{subequations}\label{eq_nec}\begin{align}
	\mPic\Aol&=-\A^\tp\mPic-\C^\tp\rho\Qy(\C\mPi+\D_d-\Col)\textkom \label{eq_necc}\\
	\zero&=\epsilon\R\mGamma+\B^\tp\mPic. \label{eq_necu}
	\end{align}\end{subequations}
In view of \cite{Bernhard2017b}, $\big(\xs^*,\us^*)$ can also be derived by the \textit{minimum principle} for an infinite horizon  \cite{Halkin1974}. Hence, $(\mPi\xol,\mGamma\xol)$ qualifies as an optimal solution if general variations are considered which we discuss in \secref{sec_var}. Since we only consider stationary variations at first, the conditions \eqref{eq_cond} may not be necessary for optimality. Nonetheless, choosing $(\mPi\xol,\mGamma\xol)$ in order to satisfy \eqref{eq_cond} is favorable as we show:
\begin{lemma}\label{lem_delta1J}
	Suppose both conditions \eqref{eq_Hx} and \eqref{eq_Hu} hold for an admissible solution $(\xs^*,\us^*)=(\mPi\xol,\mGamma\xol)$ and the costate $\phiv=\mPic\xol$. Then, for every admissible solution $(\xs,\us)$, the power difference satisfies $\Delta P(\xs,\us,\xs^*,\us^*)=\lim_{T\to\infty}\frac{1}{2T}\dJ{2}(\dx,\du)$.
\end{lemma}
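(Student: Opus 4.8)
The plan is to substitute the Taylor expansion of $J_T(\cdot)$ directly into the definition of the power difference and then exploit the hypotheses \eqref{eq_Hx}--\eqref{eq_Hu} to kill the first-order term. Concretely, for an arbitrary admissible solution $(\xs,\us)$, write it as $(\xs^*+\dx,\us^*+\du)$ with $(\dx,\du)=(\dPi\xol,\dGamma\xol)$ the stationary variation with $\gamma=1$, as set up just before the lemma. Then \eqref{eq_diffJ} gives $J_T(\xs,\us)=J_T(\xs^*,\us^*)+\dJ{1}_T(\cdot)+\tfrac12\dJ{2}_T(\cdot)$, so that
$\Delta P(\cdot)=\lim_{T\to\infty}\tfrac1T\big(\dJ{1}_T(\cdot)+\tfrac12\dJ{2}_T(\cdot)\big)$.
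The claim is then equivalent to showing that $\lim_{T\to\infty}\tfrac1T\dJ{1}_T(\cdot)=0$.

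First I would invoke the hypotheses: since \eqref{eq_Hx} and \eqref{eq_Hu} hold for $(\xs^*,\us^*)$ and $\phiv$, the integrand in the expression for $\dJ{1}_T(\cdot)$ derived in the preliminaries vanishes identically, so $\dJ{1}_T(\cdot)$ reduces to the boundary terms \eqref{eq_1dJ}, namely $\dJ{1}_T(\cdot)=\phiv(0)^\tp\dx(0)-\phiv(T)^\tp\dx(T)$. The first term is a fixed constant independent of $T$, so it contributes $0$ to $\lim_{T\to\infty}\tfrac1T(\cdot)$. For the second term I would argue boundedness: $\phiv(t)=\mPic\xol(t)$ and $\dx(t)=\dPi\xol(t)$ are both linear images of $\xol(t)$, and by \assref{ass_eig} all eigenvalues of $\Aol$ are on the imaginary axis with equal algebraic and geometric multiplicities, hence $\xol(\cdot)$ is bounded on $[0,\infty)$ for every $\xol_0$. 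Therefore $\phiv(T)^\tp\dx(T)$ is bounded uniformly in $T$, and dividing by $T$ and letting $T\to\infty$ gives $0$. Combining, $\lim_{T\to\infty}\tfrac1T\dJ{1}_T(\cdot)=0$, and the remaining term yields $\Delta P(\cdot)=\lim_{T\to\infty}\tfrac{1}{2T}\dJ{2}_T(\dx,\du)$, which is the assertion (noting $\dJ{2}_T$ depends only on $(\dx,\du)$ by \eqref{eq_2dJ}).

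The only delicate point — the "main obstacle" — is the treatment of the boundary term at $T$: one must be sure the relevant limit actually exists and is zero rather than merely $\liminf=0$ along a subsequence. Here this is clean because of the explicit $\tfrac1T$ scaling against a bounded quantity, so in fact $\tfrac1T\phiv(T)^\tp\dx(T)=\mathcal{O}(1/T)\to 0$ outright; no subtlety about oscillation survives. I would also remark that the existence of $\Delta P(\cdot)$ as a genuine limit (not just $\limsup$) follows a posteriori from this identity together with the fact that $\dJ{2}_T(\dx,\du)$, being the time integral of a bounded almost-periodic integrand, has a well-defined Cesàro mean — exactly the well-definedness of the powers already asserted under \assref{ass_eig}. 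This is worth a one-line comment but requires no new work.
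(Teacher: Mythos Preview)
Your proposal is correct and follows essentially the same route as the paper: expand $\Delta J_T$ via \eqref{eq_diffJ} with $\gamma=1$, use \eqref{eq_Hx}--\eqref{eq_Hu} to reduce $\dJ{1}_T$ to the boundary terms \eqref{eq_1dJ}, and invoke \assref{ass_eig} to get boundedness of $\phiv(t)$ and $\dx(t)$ so that $\tfrac{1}{T}\dJ{1}_T\to 0$. Your additional remark on the existence of the Ces\`aro limit for $\dJ{2}_T$ is a nice touch that the paper leaves implicit.
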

\begin{proof}
	The cost difference $\Delta J_T\cd$ is given by \eqref{eq_diffJ} and, thus, it results $\Delta P\cd=\lim_{T\to\infty}\frac{1}{T}\big(\dJ{1}_T(\cdot)+\tfrac{1}{2}\dJ{2}_T(\cdot)\big)$ for $\gamma=1$. By \eqref{eq_Hx} and \eqref{eq_Hu}, \eqref{eq_1dJ} is true.  
	Due to \assref{ass_eig}, $\Vert \phiv(t)\Vert_2$, $\Vert \dx(t)\Vert_2$ and, hence, $\vert\dJ{1}_T(\cdot)\vert$ are bounded functions on $[0,\infty)$ for all $\xol_0$ which implies $\lim_{T\to\infty}\frac{1}{T}\vert\dJ{1}_T(\cdot)\vert=0$.
\end{proof}

Due to this lemma, the optimality analysis of such $(\xs^*,\us^*)$ is very promising. In the proofs of our main results, we exploit this and the quadratic nature of $\dJ{2}(\dx,\du)$ in \eqref{eq_2dJ} by applying the useful \lemref{lem_Wo} given in \appref{app_A}.%

 \section{Optimal Output Regulation for\\ Linear Systems}
\label{sec_oor}

In this section, we derive our main results that are two new sets of regulator equations which provide solutions to the optimal output regulation problems \ref{oor} and \ref{uor}, respectively. We investigate solvability conditions, the connection to the classical regulator equations $\RG$ and how \lqtref{lqt} unifies both problems.

\subsection{If Regulator Equations $\RG$ Have Infinitely Many Solutions}
\label{sec_oa}

\ldots\,then we want to answer question \ref{ques1}), i.e., we seek an optimal solution $(\xs^*,\us^*)=(\mPi^\oa\xol,\mGamma^\oa\xol)$ to \oorref{oor}. This is generally desired for over-actuated systems: $\rank{\B}>\rank{\C}$.

To find a suitable candidate $(\mPi^\oa\xol,\mGamma^\oa\xol)$, we carry out the analysis in \secref{sec_lqt} with respect to \oorref{oor}. First, we note that \oorref{oor} is equivalent to \lqtref{lqt} with additional state constraint \eqref{eq_constr} and for the choice $\epsilon=1$. Since $\ys(t)-\yol(t)\equiv\zero$, an admissible solution $(\mPi\xol,\mGamma\xol)$ is feasible only if $(\mPi,\mGamma)$ solves the regulator equations $\RG$, see \cite{Francis1977}. Hence, a variation $(\dx,\du)=(\dPi\xol,\dGamma\xol)$ defined by $\dPi=\mPi-\mPi^\oa$ and $\dGamma=\mGamma-\mGamma^\oa$ (choose $\gamma=1$) must also satisfy $\C\dx(t)=\C\dPi\xol(t)\equiv\zero$ $\forall \xol_0\neq\zero$ and, accordingly,%
\begin{subequations}\label{eq_feasvaroa}\begin{align}
	\dPi\Aol&=\A\dPi+\B\dGamma,\label{eq_statvar}\\
	\zero&=\C\dPi. \label{eq_var_constr}
	\end{align}\end{subequations}
We remark that a nontrivial solution of \eqref{eq_feasvaroa} exists if and only if the solution of $\RG$ is not unique.
Next, we construct a candidate $(\mPi^\oa\xol,\mGamma^\oa\xol)$ such that \lemref{lem_delta1J} can be applied, i.e., conditions \eqref{eq_Hx} and \eqref{eq_Hu} hold. In \secref{sec_lqt}, we had to regard arbitrary stationary variations and, thus, we chose the coefficient of $\dx$ in \eqref{eq_Hx} equal zero. However, here we may require instead
\begin{equation}\label{eq_Hx_oa}
\tfrac{\partial H(\txtfor{\xs,\us,\phiv,\xol})}{\partial \txtfor{\xs}}\Big\vert_*+\dot\phiv=\C^\tp\mGammac^\oa\xol
\end{equation}
with arbitrary $\mGammac^\oa\in\RZ^{p\times\nol}$. Then \eqref{eq_Hx} equals $\xol^\tp\mGammac^{\oa\tp}\C\dx=\zero$ which is always true since $\C\dx(t)\equiv\zero$. Since $\ys\equiv\yol$, it follows ${H\cd=\tfrac{1}{2}\us^\tp\R\us+\phiv^\tp(\A\xs+\B\us+\Ed)}$. Then to satisfy \eqref{eq_Hx_oa},
we choose ${\phiv=\mPic^\oa\xol}$ and obtain $\mPic^\oa\Aol=-\A^\tp\mPic^\oa+\C^\tp\mGammac^\oa$. Though $(\dPi,\dGamma)$ is now constrained by \eqref{eq_feasvaroa}, we still choose \eqref{eq_necu} to satisfy \eqref{eq_Hu}. For $\epsilon=1$, this gives $\mGamma^\oa=-\R^{-1}\B^\tp\mPic^\oa$.

Taking these equations together with $\RG$ into account, we are able to present the \textit{new regulator equations} $\RGoa$:%
\begin{subequations}\label{eq_regoa}\begin{align}[left = \RGoa\empheqlbrace]
	\mPic^\oa\Aol&=-\A^\tp\mPic^\oa+\C^\tp\mGammac^\oa \label{eq_oa1}\\
	\mPi^\oa\Aol&=\A\mPi^\oa-\B\R^{-1}\B^\tp\mPic^\oa+\Ed \label{eq_oa2}\\
	\Col&=\C\mPi^\oa+\Dd. \label{eq_oa3}
\end{align}\end{subequations}
A solution $(\mPi^\oa,\mPic^\oa,\mGammac^\oa)$ of $\RGoa$ provides our solution candidate $(\mPi^\oa,-\R^{-1}\B^\tp\mPic^\oa)$ to \oorref{oor}. 
Before we verify its optimality, we state the powerful result that we can always solve $\RGoa$ if \orref{or} has at least one solution.%
\begin{lemma}\label{lem_solv_regoa}
	A triple $(\mPi^\oa,\mPic^\oa,\mGammac^\oa)$ solving the new regulator equations $\RGoa$ exists if and only if a pair $(\mPi,\mGamma)$ solving the classical regulator equations $\RG$ exists. \hfill\QED
\end{lemma}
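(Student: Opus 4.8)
The plan is to prove both directions. The ``only if'' direction is immediate: if $(\mPi^\oa,\mPic^\oa,\mGammac^\oa)$ solves $\RGoa$, then \eqref{eq_oa2} and \eqref{eq_oa3} are exactly \eqref{eq_reg1} and \eqref{eq_reg2} with $\mGamma\coloneqq -\R^{-1}\B^\tp\mPic^\oa$, so $(\mPi^\oa,-\R^{-1}\B^\tp\mPic^\oa)$ solves $\RG$. The substance is the ``if'' direction: given \emph{some} solution $(\mPi,\mGamma)$ of $\RG$, we must manufacture a solution of $\RGoa$, and the obstacle is that \eqref{eq_oa1}--\eqref{eq_oa2} couple $\mPic^\oa$ and $\mGammac^\oa$ to $\mPi^\oa$ through a Sylvester-type equation whose solvability is not obvious. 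The natural idea is to set this up as a linear quadratic problem: we know (Lemma~\ref{lem_delta1J} / the calculus-of-variations setup in Section~\ref{sec_lqt}) that $\RGoa$ encodes the stationarity conditions \eqref{eq_Hx_oa}, \eqref{eq_Hu} for the $\epsilon=1$, state-constrained version of \lqtref{lqt}, so what we really need is that this (finite-dimensional, for each generalized eigendirection of $\Aol$) constrained quadratic minimization problem is solvable.

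First I would reduce to a single Jordan block / eigenvalue: by \assref{ass_eig}, $\Aol$ is diagonalizable (over $\CZ$) with eigenvalues $\lambdaol\in j\RZ$, so write $\Aol = \sum_k \lambdaol_k P_k$ with spectral projectors $P_k$, and observe that $\RGoa$ decouples column-block-wise into, for each eigenvalue $\lambdaol$, the ``single-frequency'' equations
\begin{align*}
\lambdaol\,\Pi_\phi &= -\A^\tp \Pi_\phi + \C^\tp \Gamma_\phi,\\
\lambdaol\,\Pi &= \A\Pi - \B\R^{-1}\B^\tp \Pi_\phi + E_\lambda,\\
C_\lambda &= \C\Pi + D_\lambda,
\end{align*}
posed over $\CZ$, where $E_\lambda, D_\lambda, C_\lambda$ are the corresponding restrictions of $\Ed,\Dd,\Col$, and similarly $\RG$ decouples into $\lambdaol\Pi = \A\Pi+\B\Gamma+E_\lambda$, $C_\lambda = \C\Pi+D_\lambda$. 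Fixing one such $\lambdaol = j\omega$, the hypothesis gives a $\Pi,\Gamma$ with $j\omega\,\Pi = \A\Pi+\B\Gamma+E_\lambda$ and $C_\lambda=\C\Pi+D_\lambda$; equivalently, writing $\Ros{j\omega}$ for the Rosenbrock matrix, $\ma j\omega\I-\A & -\B\\ \C & \zero\me\ma \Pi\\ -\Gamma\me = \ma E_\lambda\\ C_\lambda-D_\lambda\me$, i.e.\ the right-hand side lies in $\mathrm{im}\,\Ros{j\omega}$.

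The key step is then to solve, for each $j\omega$, the strictly convex quadratic program $\min \tfrac12\,\Gamma'^\hp\R\,\Gamma'$ over all $(\Pi',\Gamma')$ feasible for $\RG$ at this frequency (which is an affine subspace, nonempty by hypothesis, and on which the objective is coercive since $\R\succ\zero$ forces $\Gamma'$ to lie in a bounded set once we mod out the lineality of the feasible set... more carefully: the minimum of a convex quadratic over a nonempty affine set exists iff the quadratic is bounded below on it, which here follows because on the recession subspace $\{(\dPi,\dGamma): (j\omega\I-\A)\dPi = \B\dGamma,\ \C\dPi=\zero\}$ the objective $\dGamma^\hp\R\dGamma\ge0$). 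A minimizer $(\Pi^\oa,\Gamma^\oa)$ exists; writing the first-order optimality (KKT) condition for this equality-constrained QP introduces Lagrange multipliers $\Pi_\phi$ (for the $\RG$-dynamics constraint) and $\Gamma_\phi$ (for $\C\dPi=\zero$), and the stationarity equations are precisely $\R\Gamma^\oa + \B^\tp\Pi_\phi = \zero$ together with $-j\omega\,\Pi_\phi^\hp + \Pi_\phi^\hp\A + \Gamma_\phi^\hp\C$ adjoint-type relations, i.e.\ exactly the single-frequency $\RGoa$ after taking conjugates/transposes appropriately. Re-assembling over all $\lambdaol\in\sigma(\Aol)$ and using that $\Ed,\Dd,\Col$ are real so that complex-conjugate eigenvalue blocks yield conjugate data (hence the assembled $\mPi^\oa,\mPic^\oa,\mGammac^\oa$ can be taken real) completes the construction.

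The main obstacle I anticipate is the bookkeeping in the KKT step: one must verify that the Lagrange-multiplier stationarity conditions for the frequency-domain QP really coincide, after transposition/conjugation and reassembly, with \eqref{eq_oa1}--\eqref{eq_oa2}, and that the constraint qualification / existence of the minimizer genuinely holds (the objective is only positive \emph{semi}definite in the ``$\Pi$'' variables, so boundedness below on the affine feasible set, not coercivity, is what must be checked). A cleaner alternative, which I would actually prefer to write, is to bypass optimization and argue directly: pick any feasible $(\Pi,\Gamma)$ from $\RG$; it remains to find $\Pi_\phi,\Gamma_\phi$ with $\R\Gamma^\oa = -\B^\tp\Pi_\phi$ for \emph{some} feasible $\Gamma^\oa$ and the adjoint Sylvester equation $(j\omega\I+\A^\tp)\Pi_\phi = \C^\tp\Gamma_\phi$ plus the primal feasibility of $\Pi^\oa$ with $\B\R^{-1}\B^\tp\Pi_\phi$ in place of $-\B\Gamma^\oa + E_\lambda$ shifting — this is a linear feasibility problem whose solvability can be read off a Fredholm-alternative / rank argument on the stacked operator, but it is essentially the same content as the QP duality, just phrased as ``a convex feasible linear system with a compatible right-hand side has a solution.'' Either way, the remaining verifications (that the assembled real triple solves $\RGoa$, and the trivial converse above) are routine.
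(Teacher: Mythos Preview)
Your proposal is correct and the ``only if'' direction matches the paper's argument verbatim. For the ``if'' direction, however, you take a genuinely different route from the paper.

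The paper argues purely by linear algebra. After the same eigenvalue-by-eigenvalue reduction you describe, it works directly with the Hamiltonian Rosenbrock matrix $\Rosh{\lambdaol}$ and applies the Fredholm alternative: the system $\big[\zero\;\;\vec{\beta}^\tp\big]^\tp=\Rosh{\lambdaol}\zh$ is solvable iff the right-hand side annihilates $\lnullh$. The key step is to show $\lnullh\subset\set{S}_{\lambdaol}$, where $\set{S}_{\lambdaol}$ consists of vectors whose lower block lies in $\lnulls$; this is proved by decomposing $\Rosh{\lambdaol}$ into costate and plant blocks, using the identity $\Rosc{\lambdaol}=\Ross{\lambdaol}^\hp$ (which holds precisely because $\lambdaol\in j\RZ$ by \assref{ass_eig}), and then exploiting $\R\succ\zero$ to force a cross term $\widetilde{\vec{\nu}}_{\s1}^\hp\R\widetilde{\vec{\nu}}_{\s1}=0$, hence $\widetilde{\vec{\nu}}_{\s1}=\zero$. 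Your approach instead sets up the single-frequency constrained QP, observes that the objective $\tfrac12\Gamma'^\hp\R\Gamma'\geq0$ is trivially bounded below on the (nonempty, by hypothesis) affine feasible set so a minimizer exists, and then reads off $\mPic^\oa,\mGammac^\oa$ as the KKT multipliers---which exist automatically because the constraints are affine.

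Both arguments hinge on the same two ingredients ($\lambdaol\in j\RZ$ and $\R\succ\zero$), just deployed differently. Your route is more conceptual and connects transparently to the variational derivation of $\RGoa$ in \secref{sec_lqt}; the paper's route is more elementary in that it uses no optimization theory, only nullspace chasing. The ``cleaner alternative'' you sketch at the end---a direct Fredholm/rank argument on the stacked operator---is in fact exactly what the paper does, so if you pursued that you would converge on their proof. The bookkeeping you flag (matching KKT stationarity to \eqref{eq_oa1}--\eqref{eq_oa2}, and reassembling a real solution from conjugate eigenvalue pairs) is genuine but routine; the paper does not spell out the realness step either.
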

The proof is given in \appref{app_B}. Now, we are ready to derive our \textbf{first main result}:
\begin{theorem}\label{thm_oa}
	A solution to \oorref{oor} exists if and only if a triple $(\mPi^\oa,\mPic^\oa,\mGammac^\oa)$ exists that solves the new regulator equations $\RGoa$. The optimal solution is $(\xs^*,\us^*)=(\mPi^\oa\xol,\mGamma^\oa\xol)$ where $\mGamma^\oa=-\R^{-1}\B^\tp\mPic^\oa$, which minimizes $P^u(\cdot)$ under constraint \eqref{eq_constr} for every $\xol_0$. It is unique if and only if the following condition holds:
	\begin{equation}\label{eq_obsv_cond}
		\rank{\ma \lambdaol\I-\A^\tp &  \C^\tp\me}=n\textkom\;\; \forall \lambdaol\in\sigma\left(\Aol\right).
	\end{equation}
\end{theorem}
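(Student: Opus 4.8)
The plan is to prove, in order, the existence equivalence, the optimality of $(\mPi^\oa\xol,\mGamma^\oa\xol)$, and the uniqueness characterization. The ``only if'' direction of the existence claim is immediate: if \oorref{oor} has a solution then the set of admissible solutions obeying \eqref{eq_constr} is nonempty, hence \cite{Francis1977} gives a solution of $\RG$, and \lemref{lem_solv_regoa} then yields a solution of $\RGoa$. For the ``if'' direction I would take any triple $(\mPi^\oa,\mPic^\oa,\mGammac^\oa)$ solving $\RGoa$, set $\mGamma^\oa\coloneqq-\R^{-1}\B^\tp\mPic^\oa$, and observe that substituting this into \eqref{eq_oa2}, together with \eqref{eq_oa3}, shows $(\mPi^\oa,\mGamma^\oa)$ solves $\RG$; thus $(\xs^*,\us^*)=(\mPi^\oa\xol,\mGamma^\oa\xol)$ is admissible and satisfies \eqref{eq_constr}. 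Everything then hinges on showing that this admissible solution is optimal for every $\xol_0$.

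For optimality I would apply \lemref{lem_delta1J} with costate $\phiv=\mPic^\oa\xol$, which requires checking \eqref{eq_Hx} and \eqref{eq_Hu}. Condition \eqref{eq_Hu} holds because the relation $\mGamma^\oa=-\R^{-1}\B^\tp\mPic^\oa$ is precisely \eqref{eq_necu} with $\epsilon=1$. Condition \eqref{eq_Hx} holds because, under the constraint $\ys\equiv\yol$, equation \eqref{eq_oa1} is exactly the requirement \eqref{eq_Hx_oa}, so that the left side of \eqref{eq_Hx} equals $\xol^\tp\mGammac^{\oa\tp}\C\dx$, which vanishes since every admissible feasible variation satisfies $\C\dx=\C\dPi\xol\equiv\zero$ by \eqref{eq_var_constr}. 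Hence \lemref{lem_delta1J} gives, for every admissible solution $(\xs,\us)$ obeying \eqref{eq_constr}, $\Delta P^u\cd=\lim_{T\to\infty}\tfrac{1}{2T}\dJ{2}_T(\dx,\du)$; and because $\C\dx\equiv\zero$, expression \eqref{eq_2dJ} with $\epsilon=1$ reduces to $\dJ{2}_T(\dx,\du)=\int_0^T\du^\tp\R\du \dop t\ge0$ since $\R\succ\zero$. Therefore $\Delta P^u\cd\ge0$ for every $\xol_0$, i.e.\ $(\xs^*,\us^*)$ solves \oorref{oor}.

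For uniqueness, the same computation shows that for an admissible feasible $(\xs,\us)\not\equiv(\xs^*,\us^*)$ one has $\Delta P^u\cd=0$ if and only if the bounded, (quasi-)periodic signal $\du=\dGamma\xol$ has vanishing power, i.e.\ if and only if $\du\equiv\zero$. Reducing to the $\Aol$-invariant subspace spanned by the trajectory of $\xol_0$ --- on which $\dPi$ must then solve the input-free system $\dPi\Aol=\A\dPi$, $\C\dPi=\zero$ --- this makes $(\xs^*,\us^*)$ unique (for every $\xol_0$) equivalent to: the only $\dPi$ with $\dPi\Aol=\A\dPi$ and $\C\dPi=\zero$ is $\dPi=\zero$. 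It remains to identify this Sylvester-type condition with \eqref{eq_obsv_cond}, which by the Popov--Belevitch--Hautus test and the conjugate symmetry of $\sigma(\Aol)$ is precisely the statement that $(\C,\A)$ has no unobservable eigenvalue in $\sigma(\Aol)$. If a nonzero such $\dPi$ existed, then --- using that \assref{ass_eig} makes $\Aol$ diagonalizable --- some eigenvector $\vec v$ of $\Aol$ with $\Aol\vec v=\lambdaol\vec v$ would satisfy $\vec w\coloneqq\dPi\vec v\neq\zero$, and then $\A\vec w=\lambdaol\vec w$, $\C\vec w=\zero$ exhibit $\lambdaol\in\sigma(\Aol)$ as an unobservable eigenvalue, contradicting \eqref{eq_obsv_cond}. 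Conversely, if \eqref{eq_obsv_cond} fails at some $\lambdaol\in\sigma(\Aol)$, pick $\vec w\neq\zero$ with $(\A-\lambdaol\I)\vec w=\zero$, $\C\vec w=\zero$ and a left eigenvector $\vec u^\tp$ of $\Aol$ at $\lambdaol$; then $\dPi\coloneqq\vec w\vec u^\tp$ (its real part $\Re{\vec w\vec u^\tp}$, suitably scaled, if $\lambdaol\notin\RZ$) is nonzero and satisfies $\dPi\Aol=\A\dPi$ and $\C\dPi=\zero$.

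I expect the uniqueness part to be the main obstacle. Converting ``$\Delta P^u\cd=0$'' into the purely algebraic statement needs the power/quasi-periodicity argument together with the reduction to the trajectory's invariant subspace, both relying on \assref{ass_eig}; and the PBH-type translation into \eqref{eq_obsv_cond} again uses \assref{ass_eig} (diagonalizability of $\Aol$, so that a nonzero $\dPi$ cannot annihilate every eigenvector) and the usual care with complex eigenvalues to produce a real $\dPi$. The existence and optimality parts, by contrast, amount to bookkeeping on top of Lemmas~\ref{lem_delta1J} and~\ref{lem_solv_regoa}.
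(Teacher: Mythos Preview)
Your existence and optimality arguments coincide with the paper's: both invoke \lemref{lem_solv_regoa} for the equivalence and \lemref{lem_delta1J} (with $\phiv=\mPic^\oa\xol$ and the observation $\C\dx\equiv\zero$) to obtain $\Delta P^u\cd=\lim_{T\to\infty}\tfrac{1}{2T}\int_0^T\du^\tp\R\du\dop t\ge0$.

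For uniqueness your route differs in organization from the paper's. The paper proceeds via \lemref{lem_Wo}: it performs a three-case observability decomposition of the pair $(\R_{\text{\tiny$\nicefrac{1}{2}$}}\dGamma,\Aol)$, using the Gramian-based \lemref{lem_Wo} to extract $\Delta P^u>0$ on the observable part and then invoking \eqref{eq_obsv_cond} column-by-column (with $\Aol$ diagonalized) to force $\dPi_2=\zero$ on the unobservable part. You instead argue in one step that vanishing power of the quasi-periodic signal $\du=\dGamma\xol$ forces $\du\equiv\zero$, restrict to the cyclic $\Aol$-invariant subspace generated by $\xol_0$, and reduce uniqueness (for every $\xol_0$) to the Sylvester-type statement that $\dPi\Aol=\A\dPi$, $\C\dPi=\zero$ admits only $\dPi=\zero$, which you then identify with \eqref{eq_obsv_cond} via PBH. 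Both arguments are correct and rest on the same two facts supplied by \assref{ass_eig} (bounded sinusoidal trajectories have zero power only if identically zero; $\Aol$ is diagonalizable). Your packaging is a bit more modular---separating the analytic ``power $=0\Rightarrow$ signal $\equiv\zero$'' step from the purely algebraic Sylvester/PBH step---whereas the paper's decomposition makes the role of \lemref{lem_Wo} explicit. The necessity constructions (building a nontrivial variation from a PBH witness at some $\lambdaol$) are the same up to the paper's choice of diagonal coordinates for $\Aol$; your remark about taking the real part when $\lambdaol\notin\RZ$ is exactly the bookkeeping needed there.
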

\begin{proof}
	The solvability of \oorref{oor} requires that the constraint~\eqref{eq_constr} is satisfied for some $(\mPi\xol,\mGamma\xol)$ and all $\xol_0$. This implies that $\RG$ has a solution under the present assumptions \ref{ass_eig}, cf. \cite{Francis1977}. By \lemref{lem_solv_regoa}, $\RGoa$ has a solution which proves the only if.
	
	Following the discussion at the beginning of this section, for a candidate $(\xs^*,\us^*)=(\mPi^\oa\xol,\mGamma^\oa\xol)$ and $\phiv=\mPic^\oa\xol$ obtained by $\RGoa$ all conditions of \lemref{lem_delta1J} hold. Hence, for every feasible variation obtained from \eqref{eq_feasvaroa} it results%
	\begin{equation}\label{eq_diffPu}
	\hspace{-4mm}\Delta P^u(\dx,\du,\xs^*,\us^*)=\lim_{T\to\infty}\frac{1}{2T}\int_0^T\hspace{-2mm} \xol(t)^\tp\dGamma^\tp\R \dGamma\xol(t) \dop t.\hspace{-2mm}
	\end{equation}
	Since $\R\succ\zero$, we find $\Delta P^u\cd\geq\zero$ for all feasible $(\dx,\du)$ and all $\xol_0$. Thus, $(\mPi^\oa\xol,\mGamma^\oa\xol)$ is an optimal solution.
	
	In order to verify uniqueness if \eqref{eq_obsv_cond} holds,  we show by an exhaustive three-part case study that $\Delta P^u\cd>0$ is satisfied for all nontrivial variations $(\dPi\xol,\dGamma\xol)\not\equiv(\zero,\zero)$. As a consequence of \lemref{lem_Wo}, $\Delta P^u\cd>0$ is true for all nontrivial variations for which the system  $(\R_{\text{\tiny$\nicefrac{1}{2}$}}\dGamma,\Aol)$ is completely observable (for some $\R_{\text{\tiny$\nicefrac{1}{2}$}}$: $\R=\R_{\text{\tiny$\nicefrac{1}{2}$}}^\tp\R_{\text{\tiny$\nicefrac{1}{2}$}}$). Suppose instead that not all but some eigenvalues of $\Aol$ are observable and assume that the system $(\R_{\text{\tiny$\nicefrac{1}{2}$}}\dGamma,\Aol)$ is given in form of the decomposition~\eqref{eq_exo_decomp}. Since $\R_{\text{\tiny$\nicefrac{1}{2}$}}$ is invertible, it results $\dGamma=\ma \dGamma_1 & \zero \me$, i.e., $\dGamma_2=\zero$. Based on \eqref{eq_feasvaroa}, it follows that $\dPi_2\Aol_{22}=\A\dPi_2$ and $\zero=\C\dPi_2$. Because $\Aol_{22}$ is diagonal, for each column $\delta\vec{\pi}_{2,i}$ of $\dPi_2$ it must hold $\delta\vec{\pi}_{2,i}^\tp \ma \lambdaol\I-\A^\tp &  \C^\tp\me=\zero$ for its associated $\lambdaol\in\sigma(\Aol_{22})$. By condition \eqref{eq_obsv_cond}, it results $\delta\vec{\pi}_{2,i}=\zero$ for all columns which leads to $\dPi_2=\zero$. Hence, every such variation reads $(\dPi\xol,\dGamma\xol)=(\dPi_1\xol_1,\dGamma_1\xol_1)$. It is nontrivial only if $\xol_1(0)\neq\zero$ which implies $\Delta P^u\cd>0$ by \lemref{lem_Wo}. Completing the case study, we notice that any variation for which $(\R_{\text{\tiny$\nicefrac{1}{2}$}}\dGamma,\Aol)$ is completely unobservable implies $\dGamma=\zero$ and, hence, is trivial since $\dPi=\zero$ due to \eqref{eq_obsv_cond}.
	
	To show necessity of condition \eqref{eq_obsv_cond}, suppose $(\xs^*,\us^*)$ is unique and \eqref{eq_obsv_cond} does not hold for some $\lambdaol\in\sigma(\Aol)$. Hence, for some $\CZ^{n}\ni\sdPi\neq\zero$ it results $\sdPi^\tp \ma \lambdaol\I-\A^\tp &  \C^\tp\me=\zero$. Due to \assref{ass_eig}, we suppose that $\Aol$ is diagonal without loss of generality and the above holds for $\lambdaol$ being the first element on the diagonal. Then we construct the feasible variation $(\sdPi\overline{x}_1,\zero)$ that is nontrivial if $\overline{x}_1(0)\neq0$ and for which $\Delta P^u\cd=0$ holds which contradicts uniqueness.
\end{proof}	

To guarantee the \textit{solvability} of $\RGoa$, we consider the well known \textit{non-resonance condition} \cite{Isidori2017}:%
\assumption \label{ass_nonres} For all $\lambdaol\in\sigma\left(\Aol\right)$, it holds
\begin{equation}\label{eq_nonres}
	\rank{\Ros{\lambdaol,\C,\A,\B}}=n+p.
\end{equation}

Actually, this condition is true for most over-actuated systems if $\rank{\B}=m>p=\rank{\C}$ since these systems usually do not have any invariant zeros, see \cite[Thm. 5]{Davison1974}.  
By \cite[Thm. 1.9]{Huang2004}, we recall that $\RG$ is solvable for all $\Ed$, $\Dd$ and $\Col$ if and only if \assref{ass_nonres} holds. Hence, the next result is immediate due to \lemref{lem_solv_regoa}:%
\begin{coroll}\label{coroll_nonres_oa}
	The new regulator equations $\RGoa$ have a solution $(\mPi^\oa,\mPic^\oa,\mGammac^\oa)$ for all matrices $\Ed$, $\Dd$ and $\Col$ if and only if \assref{ass_nonres} holds. This solution is unique if and only if condition \eqref{eq_obsv_cond} is satisfied in addition.%
	  \hfill\QED
\end{coroll}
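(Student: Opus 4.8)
The plan is to reduce Corollary~\ref{coroll_nonres_oa} entirely to already-established facts, so the ``proof'' is essentially a short chain of equivalences. First I would invoke Lemma~\ref{lem_solv_regoa}, which states that a triple $(\mPi^\oa,\mPic^\oa,\mGammac^\oa)$ solving $\RGoa$ exists \emph{if and only if} a pair $(\mPi,\mGamma)$ solving the classical regulator equations $\RG$ exists. This reduces the solvability question for $\RGoa$ to the solvability of $\RG$, independently of the particular data $\Ed,\Dd,\Col$.

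Next I would appeal to the cited result \cite[Thm.~1.9]{Huang2004}: under the standing assumptions (in particular \assref{ass_eig}, which guarantees $\sigma(\Aol)\cap\sigma(\A)$ need not be empty but the relevant spectral hypotheses hold), the classical regulator equations $\RG$ are solvable for \emph{all} matrices $\Ed,\Dd,\Col$ if and only if the non-resonance condition \assref{ass_nonres}, i.e. $\rank{\Ros{\lambdaol,\C,\A,\B}}=n+p$ for all $\lambdaol\in\sigma(\Aol)$, holds. Chaining this with Lemma~\ref{lem_solv_regoa} yields: $\RGoa$ is solvable for all $\Ed,\Dd,\Col$ $\iff$ $\RG$ is solvable for all $\Ed,\Dd,\Col$ $\iff$ \assref{ass_nonres} holds. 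That establishes the first sentence of the corollary.

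For the uniqueness claim, I would simply quote \thmref{thm_oa}: whenever a solution of $\RGoa$ exists, it is unique if and only if the observability-type condition \eqref{eq_obsv_cond}, $\rank{\ma \lambdaol\I-\A^\tp & \C^\tp\me}=n$ for all $\lambdaol\in\sigma(\Aol)$, is satisfied. Since under \assref{ass_nonres} a solution always exists (by the first part), the phrase ``unique if and only if \eqref{eq_obsv_cond} holds in addition'' is just \thmref{thm_oa} specialized to this situation. I would note in passing that the two rank conditions are logically independent — \assref{ass_nonres} constrains the full Rosenbrock matrix $\Ros{\lambdaol,\C,\A,\B}$ whereas \eqref{eq_obsv_cond} is a pure observability condition on $(\C,\A)$ at the exosystem eigenvalues — so neither implies the other and both must be assumed for existence-and-uniqueness for arbitrary data.

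The proof involves essentially no new work; the only thing worth a careful sentence is making sure the quantifier ``for all $\Ed,\Dd,\Col$'' is handled consistently across the two cited results and Lemma~\ref{lem_solv_regoa} — i.e.\ that the ``if and only if'' in Lemma~\ref{lem_solv_regoa} holds data-by-data, which it does since the proof of that lemma (in \appref{app_B}) constructs one solution from the other for fixed data. The main (very mild) obstacle is thus bookkeeping of quantifiers rather than any substantive argument; everything else is immediate from \lemref{lem_solv_regoa}, \thmref{thm_oa}, and \cite[Thm.~1.9]{Huang2004}.
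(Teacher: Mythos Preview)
Your proposal is correct and mirrors the paper exactly: the text immediately preceding the corollary invokes \cite[Thm.~1.9]{Huang2004} together with \lemref{lem_solv_regoa} for the solvability claim, and the uniqueness clause is meant to be read off \thmref{thm_oa}, so the corollary is stated with a \QED\ and no separate proof. One small caveat (which the paper also leaves implicit) is that \thmref{thm_oa} asserts uniqueness of the optimal stationary solution $(\xs^*,\us^*)$, not directly of the triple $(\mPi^\oa,\mPic^\oa,\mGammac^\oa)$; closing that gap uses that under \assref{ass_nonres} the left nullspace of $\Ross{\lambdaol}$ is trivial, which --- via the same decomposition argument as in the proof of \lemref{lem_solv_regoa} --- forces $(\mPic^\oa,\mGammac^\oa)$ to be uniquely determined once $(\mPi^\oa,\mGamma^\oa)$ is.
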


\remark \label{rem_DJ_oa}
 When condition \eqref{eq_obsv_cond} does not hold, one may wonder if some $(\mPi^\oa\xol,\mGamma^\oa\xol)$ among the optimal stationary solutions exists that performs better with respect to $J_T^u(\cdot)$ (since $\Delta P^u(\cdot)=0$ for all of them). Following the proof of \thmref{thm_oa}, all these are of the form $\big((\mPi^\oa+\dPi)\xol,\mGamma^\oa\xol\big)$ and, clearly, it holds $\Delta J_T^u(\cdot)=0$.

\subsection{If Regulator Equations $\RG$ Have No Solution At All}
\label{sec_ua}

\ldots\,then we want to answer question \ref{ques2}) and seek a solution $(\xs^*,\us^*)=(\mPi^\ua\xol,\mGamma^\ua\xol)$ to \oorref{uor} which is very important for under-actuated systems, i.e., for $\rank{\B}<\rank{\C}$.

Comparing \eqref{eq_Jy} with \eqref{eq_Jlqt} for a given pair $(\mPi,\mGamma)$, both costs coincide for all $\xol_0$ if we choose $\rho=1$ and $\epsilon=0$. Thus, to carry out the analysis in \secref{sec_prelim} for \oorref{uor} is equivalent to
the substitution of $\rho=1$ and $\epsilon=0$ in \eqref{eq_necc} and \eqref{eq_necu}. Accordingly, we derive the \textit{new regulator equations} $\RGua$:%
\begin{subequations}\label{eq_regua}\begin{align}[left = \hspace{-3.5mm}\RGua\empheqlbrace]
	\mPi^\ua\Aol&=\A\mPi^\ua+\B\mGamma^\ua+\Ed \label{eq_ua12}\\
	\mPic^\ua\Aol&=-\A^\tp\mPic^\ua-\C^\tp\Qy(\C\mPi^\ua+\D_d-\Col)\hspace{-1.5mm} \label{eq_ua2}\\
	\zero&=-\B^\tp\mPic^\ua. \label{eq_ua3}
	\end{align}\end{subequations}
For the same reasons as in the discussion preceding \lemref{lem_delta1J}, a solution to \oorref{uor} may not necessarily satisfy $\RGua$. However, to the best of our knowledge, necessary optimality conditions are not available for \oorref{uor}. Hence, we focus on showing that if a solution of $\RGua$ exists, then $(\xs^*,\us^*)=(\mPi^\ua\xol,\mGamma^\ua\xol)$
is an optimal solution among all stationary solutions. These can always be written as $(\xs,\us)=(\xs^*+\dPi\xol,\us^*+\dGamma\xol)$ where $(\dPi,\dGamma)$ has to solve \eqref{eq_statvar}. \textit{Solvability} of \eqref{eq_regua} is shown under the very general
\assumption 
\label{ass_nonres_ua} For all $\lambdaol\in\sigma\left(\Aol\right)$, it holds
\begin{equation}\label{eq_nonres_ua}
\rank{\Ros{\lambdaol,\C,\A,\B}}=n+m.
\end{equation}

\begin{lemma}\label{lem_nonres_ua}
	The new regulator equations $\RGua$ have a unique solution $(\mPi^\ua,\mPic^\ua,\mGamma^\ua)$ for all matrices $\Ed$, $\Dd$ and $\Col$ if and only if \assref{ass_nonres_ua} holds. \hfill\QED
\end{lemma}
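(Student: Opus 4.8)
The plan is to reduce Lemma~\ref{lem_nonres_ua} to a standard Sylvester-type solvability statement by treating the three equations of $\RGua$ column-by-column in an eigenbasis of $\Aol$. Using Assumption~\ref{ass_eig}, I would first assume without loss of generality that $\Aol$ is diagonal (or block-diagonal with distinct purely imaginary eigenvalues and trivial Jordan structure), so that for each eigenvalue $\lambdaol\in\sigma(\Aol)$ the corresponding columns $\vec{\pi}$, $\vec{\pi}_{\mathrm c}$, $\vec{\gamma}$ of $\mPi^\ua$, $\mPic^\ua$, $\mGamma^\ua$ must satisfy the decoupled linear system
\begin{subequations}
\begin{align}
\lambdaol\vec{\pi}&=\A\vec{\pi}+\B\vec{\gamma}+\vec{e}\textkom\\
\lambdaol\vec{\pi}_{\mathrm c}&=-\A^\tp\vec{\pi}_{\mathrm c}-\C^\tp\Qy(\C\vec{\pi}+\vec{d}-\vec{c})\textkom\\
\zero&=-\B^\tp\vec{\pi}_{\mathrm c}\textkom
\end{align}
\end{subequations}
where $\vec e$, $\vec d$, $\vec c$ are the associated columns of $\Ed$, $\Dd$, $\Col$. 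The claim then becomes: this $(2n+m)$-dimensional linear system in the unknowns $(\vec\pi,\vec\pi_{\mathrm c},\vec\gamma)$ has a unique solution for every right-hand side if and only if \eqref{eq_nonres_ua} holds at that $\lambdaol$.

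Next I would write this system as a single square coefficient matrix acting on $(\vec\pi,\vec\gamma,\vec\pi_{\mathrm c})^\tp$ and show it is exactly a Schur-complement / Hamiltonian-type matrix built from $\Ros{\lambdaol,\C,\A,\B}$ and its conjugate transpose. Concretely, eliminating $\vec\pi_{\mathrm c}$ is awkward, so instead I would argue about the homogeneous system (zero right-hand side): $\lambdaol\vec\pi=\A\vec\pi+\B\vec\gamma$, $\lambdaol\vec\pi_{\mathrm c}=-\A^\tp\vec\pi_{\mathrm c}-\C^\tp\Qy\C\vec\pi$, $\B^\tp\vec\pi_{\mathrm c}=\zero$. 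The first equation says $\begin{bmatrix}\lambdaol\I-\A & -\B\end{bmatrix}\begin{bmatrix}\vec\pi\\\vec\gamma\end{bmatrix}=\zero$; the pair $(\vec\pi_{\mathrm c}^{\,\hp},\text{something})$ together with the third equation should be read as $\vec\pi_{\mathrm c}$ lying in the left nullspace of $\Ros{\lambdaol,\C,\A,\B}$ up to the $\C^\tp\Qy\C\vec\pi$ term. I would multiply the second homogeneous equation on the left by $\vec\pi^{\,\hp}$ and the first by $\vec\pi_{\mathrm c}^{\,\hp}$ and combine, using $\lambdaol+\overline{\lambdaol}=0$ (from Assumption~\ref{ass_eig}), to obtain $\vec\pi^{\,\hp}\C^\tp\Qy\C\vec\pi=0$, hence $\C\vec\pi=\zero$ because $\Qy\succ\zero$. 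Then $(\vec\pi,\vec\gamma)$ is a nullvector of the full Rosenbrock matrix $\Ros{\lambdaol,\C,\A,\B}$, so by \eqref{eq_nonres_ua} (full column rank $n+m$) we get $\vec\pi=\zero$, $\vec\gamma=\zero$; feeding $\vec\pi=\zero$ back, $\vec\pi_{\mathrm c}$ solves $(\lambdaol\I+\A^\tp)\vec\pi_{\mathrm c}=\zero$ with $\B^\tp\vec\pi_{\mathrm c}=\zero$, i.e.\ $\vec\pi_{\mathrm c}^{\,\hp}\Ros{\overline{\lambdaol},\C,\A,\B}$ — more precisely $\vec\pi_{\mathrm c}\in\text{leftnull}$ of the relevant block — and stabilizability (Assumption~\ref{ass_stab}) together with $\lambdaol\in j\RZ$ forces $\vec\pi_{\mathrm c}=\zero$. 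This establishes injectivity, hence (square system) existence and uniqueness, under Assumption~\ref{ass_nonres_ua}.

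For the converse, I would suppose \eqref{eq_nonres_ua} fails at some $\lambdaol$, so $\Ros{\lambdaol,\C,\A,\B}$ has a nontrivial right nullvector $(\vec\pi,\vec\gamma)\neq(\zero,\zero)$; then $\C\vec\pi=\zero$ and $(\vec\pi,\vec\gamma,\vec\pi_{\mathrm c}=\zero)$ solves the homogeneous system, contradicting uniqueness — and by a dimension count the square coefficient matrix is singular, so existence fails for generic right-hand sides as well, matching the "for all matrices $\Ed,\Dd,\Col$'' phrasing. I expect the main obstacle to be the careful bookkeeping when $\Aol$ is not literally diagonalizable over $\RZ$ but only has semisimple purely imaginary eigenvalues: one must pass to $\CZ$, track conjugate pairs, and make sure the energy identity $\lambdaol+\overline{\lambdaol}=0$ is invoked correctly, and that the reduction back to real solutions is legitimate — this is routine but is where sign errors and missed cases hide. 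A secondary subtlety is justifying the $\vec\pi_{\mathrm c}=\zero$ step cleanly from stabilizability rather than from a spurious observability assumption; I would phrase it via the PBH test for $(\A^\tp,\B^\tp)$ (equivalently detectability-type dual of stabilizability) at the imaginary eigenvalue $-\lambdaol$. Finally I would remark that this lemma is the exact $\RGua$-analogue of \cite[Thm.~1.9]{Huang2004}/Corollary~\ref{coroll_nonres_oa}, with the roles of $n+p$ and $n+m$ interchanged, reflecting the under-actuated setting $\rank{\B}<\rank{\C}$.
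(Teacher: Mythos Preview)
Your proposal is correct and follows essentially the same route as the paper: both reduce $\RGua$ column-by-column via the diagonalization of $\Aol$, study the homogeneous $(2n+m)$-system at each $\lambdaol\in j\RZ$, exploit the purely imaginary eigenvalue (your identity $\lambdaol+\overline{\lambdaol}=0$, the paper's $\Rosc{\lambdaol}^\hp=\Ross{\lambdaol}$) to force a quadratic form in $\Qy$ to vanish, then invoke \assref{ass_stab} for the costate block and \assref{ass_nonres_ua} for the state/input block. Your direct energy computation $\vec\pi^{\,\hp}\C^\tp\Qy\C\vec\pi=0$ is exactly the paper's $\widetilde{\vec{\nu}}_{\ct1}^\hp\Qy^{-1}\widetilde{\vec{\nu}}_{\ct1}=0$ in different variables, and your converse construction coincides with the paper's ``only if'' argument.
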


We prove this in \appref{app_B}. Remarkably, \assref{ass_nonres_ua} is true for most under-actuated systems if $\rank{\B}=m<p=\rank{\C}$ since these systems usually do not have any invariant zeros, see \cite{Davison1974}.
Hence, by considering \oorref{uor} for the most important case of under-actuated systems, it is reasonable to assume that $\RGua$ is solvable. In this light, we state our \textbf{second main result}:
\begin{theorem}\label{thm_ua}
	Suppose the triple $(\mPi^\ua,\mPic^\ua,\mGamma^\ua)$ solves the new regulator equations $\RGua$. Then the pair $(\mPi^\ua,\mGamma^\ua)$ solves \oorref{uor} and $(\xs^*,\us^*)=(\mPi^\ua\xol,\mGamma^\ua\xol)$ is an optimal solution that minimizes $P^y(\cdot)$ for every $\xol_0$. It is unique  if and only if \assref{ass_nonres_ua} holds.
\end{theorem}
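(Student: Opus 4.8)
The plan is to follow the proof of \thmref{thm_oa} almost verbatim, now for the parameter choice $\rho=1$, $\epsilon=0$ of \lqtref{lqt} and with no state constraint. First I would observe that $(\xs^*,\us^*)=(\mPi^\ua\xol,\mGamma^\ua\xol)$ is admissible, since \eqref{eq_ua12} is exactly the state part of $\RG$. Taking the costate $\phiv=\mPic^\ua\xol$ and $\rho=1$, $\epsilon=0$ in the Hamiltonian $H=\tfrac12(\ys-\yol)^\tp\Qy(\ys-\yol)+\phiv^\tp(\A\xs+\B\us+\Ed\xol)$, I would then check that \eqref{eq_ua2} makes the coefficient $\tfrac{\partial H}{\partial\xs}\vert_*+\dot\phiv$ vanish, so that \eqref{eq_Hx} holds, and that \eqref{eq_ua3} makes $\tfrac{\partial H}{\partial\us}\vert_*=\B^\tp\mPic^\ua\xol$ vanish, so that \eqref{eq_Hu} holds. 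Then \lemref{lem_delta1J} yields, for every admissible $(\xs,\us)=(\xs^*+\dPi\xol,\us^*+\dGamma\xol)$ and every $\xol_0$, that $\Delta P^y\cd=\lim_{T\to\infty}\tfrac{1}{2T}\dJ{2}(\dx,\du)=\lim_{T\to\infty}\tfrac{1}{2T}\int_0^T(\C\dPi\xol)^\tp\Qy(\C\dPi\xol)\dop t\geq 0$ by \eqref{eq_2dJ} with $\rho=1$, $\epsilon=0$ and $\Qy\succ\zero$. Hence $(\xs^*,\us^*)$ minimizes $P^y\cd$ over all stationary solutions, for every $\xol_0$, i.e., $(\mPi^\ua,\mGamma^\ua)$ solves \oorref{uor}.

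\emph{Uniqueness $\Leftarrow$ \assref{ass_nonres_ua}.} By \lemref{lem_Wo} applied to the quadratic form above, $\Delta P^y\cd>0$ for every variation with $\C\dPi\xol(\cdot)\not\equiv\zero$ and $\Delta P^y\cd=0$ otherwise, so I only need to show that \assref{ass_nonres_ua} forces every variation with $\C\dPi\xol\equiv\zero$ to be trivial. Using \assref{ass_eig} I would diagonalize $\Aol$ and split $\xol_0=\sum_j\xol_{0,j}$ into its eigencomponents, $\Aol\xol_{0,j}=\lambdaol_j\xol_{0,j}$; linear independence of the modes $\e^{\lambdaol_j t}$ turns $\C\dPi\xol\equiv\zero$ into $\C(\dPi\xol_{0,j})=\zero$ for every $j$. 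Right-multiplying \eqref{eq_statvar} by $\xol_{0,j}$ gives $(\lambdaol_j\I-\A)(\dPi\xol_{0,j})-\B(\dGamma\xol_{0,j})=\zero$, so the vector obtained by stacking $\dPi\xol_{0,j}$ over $\dGamma\xol_{0,j}$ lies in the kernel of $\Ros{\lambdaol_j,\C,\A,\B}$ with $\lambdaol_j\in\sigma(\Aol)$; the full-column-rank condition \eqref{eq_nonres_ua} then forces $\dPi\xol_{0,j}=\zero$ and $\dGamma\xol_{0,j}=\zero$ for all $j$, whence $\dPi\xol\equiv\zero$ and $\dGamma\xol\equiv\zero$. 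Thus every nontrivial variation has $\Delta P^y\cd>0$ and $(\xs^*,\us^*)$ is unique for every $\xol_0$; in particular, \lemref{lem_nonres_ua} already guarantees that $\RGua$ itself has a unique solution.

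\emph{Uniqueness $\Rightarrow$ \assref{ass_nonres_ua}.} Contrapositively, if \eqref{eq_nonres_ua} fails at some $\lambdaol\in\sigma(\Aol)$, then $\Ros{\lambdaol,\C,\A,\B}$ has a nontrivial kernel, i.e., there are $\sdPi$ and $\delta\vec{\gamma}$, not both zero, with $(\lambdaol\I-\A)\sdPi=\B\delta\vec{\gamma}$ and $\C\sdPi=\zero$. Diagonalizing $\Aol$ with $\lambdaol$ as its first diagonal entry, I would place $\sdPi$ and $\delta\vec{\gamma}$ in the first columns of $(\dPi,\dGamma)$ and zeros elsewhere; this $(\dPi,\dGamma)$ solves \eqref{eq_statvar}, it is nontrivial for any $\xol_0$ with nonzero component along that eigenvector, and since $\C\dPi\xol\equiv\zero$ the second variation \eqref{eq_2dJ} vanishes identically, hence $\Delta P^y\cd=0$. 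Then $(\mPi^\ua+\dPi,\mGamma^\ua+\dGamma)$ is a second optimal pair, contradicting uniqueness. (For complex $\lambdaol$ one passes to a real variation by combining conjugate modes, exactly as in the proof of \thmref{thm_oa}.)

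\textbf{Main obstacle.} The first part is routine once \lemref{lem_delta1J} and \eqref{eq_2dJ} are invoked. The delicate point is the ``only if'' of the uniqueness equivalence: extracting a genuinely nontrivial, \emph{real}, feasible variation with vanishing second variation from a rank-deficient Rosenbrock pencil, together with the bookkeeping for repeated or complex eigenvalues of $\Aol$ in the eigencomponent split on the ``if'' side. Both mirror the corresponding steps in the proof of \thmref{thm_oa} and in \lemref{lem_nonres_ua}.
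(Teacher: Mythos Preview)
Your proposal is correct and follows essentially the same route as the paper: admissibility via \eqref{eq_ua12}, conditions \eqref{eq_Hx}--\eqref{eq_Hu} via \eqref{eq_ua2}--\eqref{eq_ua3}, then \lemref{lem_delta1J} and the second variation \eqref{eq_2dJ} for optimality, the Rosenbrock full-column-rank condition \eqref{eq_nonres_ua} applied eigencomponent-by-eigencomponent for the ``if'' part of uniqueness, and a kernel vector of $\Ros{\lambdaol,\C,\A,\B}$ to build a nontrivial zero-cost variation for the ``only if'' part. The only cosmetic difference is that the paper organizes the ``if'' direction as a three-case observability study of $(\Q_{\text{\tiny$\nicefrac{1}{2}$}}\C\dPi,\Aol)$ via \lemref{lem_Wo}, whereas you work directly with the eigencomponents of a fixed $\xol_0$ and linear independence of the exponentials; both reduce to the same column-wise nullspace argument.
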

\begin{proof}
	The proof is based on similar arguments as the proof of \thmref{thm_oa}.
	By construction, our candidate $(\xs^*,\us^*)=(\mPi^\ua\xol,\mGamma^\ua\xol)$ and $\phiv=\mPic^\ua\xol$ solves \eqref{eq_reg1} and \eqref{eq_nec} for $\rho=1$ and $\epsilon=0$. Thus, we can apply \lemref{lem_delta1J} and find for every variation given by \eqref{eq_statvar} that
	\begin{equation}\label{eq_diffPy}
	\Delta P^y\cd=\lim_{T\to\infty}\frac{1}{2T}\int_0^T \xol(t)^\tp\dPi^\tp\C^\tp\Q \C\dPi\xol(t) \dop t.
	\end{equation}
	Then we note that $\Delta P^y\cd\geq\zero$ for all feasible $(\dx,\du)$ and every $\xol_0$ since $\Q\succ\zero$, and $(\mPi^\ua\xol,\mGamma^\ua\xol)$ is optimal.
	
	We verify uniqueness if \assref{ass_nonres_ua} holds by a three-part case study.
	Consider every $(\dPi,\dGamma)$ given by \eqref{eq_statvar} such that the system $(\Q_{\text{\tiny$\nicefrac{1}{2}$}}\C\dPi,\Aol)$ is completely observable (for some $\Q_{\text{\tiny$\nicefrac{1}{2}$}}$ such that $\Q=\Q_{\text{\tiny$\nicefrac{1}{2}$}}^\tp\Q_{\text{\tiny$\nicefrac{1}{2}$}}$). It follows $\Delta P^y\cd>0$ for all $\xol\neq\zero$ as a consequence of \lemref{lem_Wo}. If $(\Q_{\text{\tiny$\nicefrac{1}{2}$}}\C\dPi,\Aol)$ is  not completely observable instead, we may assume that the system is decomposed as in \eqref{eq_exo_decomp}. An analogous decomposition of $\dPi$ gives $\G_1=\Q_{\text{\tiny$\nicefrac{1}{2}$}}\C\dPi_1$ and $\C\dPi_2=\zero$ since $\Q_{\text{\tiny$\nicefrac{1}{2}$}}$ is invertible. Hence, $(\dPi_2,\dGamma_2)$ has to satisfy $\dPi_2\Aol_{22}=\A\dPi_2+\B\dGamma_2$ and $\zero=\C\dPi_2$. By \assref{ass_nonres_ua}, we find that $\Ros{\lambdaol,\C,\A,\B}\vec{\nu}=\zero$ implies $\CZ^{n+m}\ni\vec{\nu}=\zero$ for all $\lambdaol\in\sigma(\Aol_{22})$. Because $\Aol_{22}$ is diagonal, both together implies that all columns of $\dPi_2$ and $\dGamma_2$ must vanish, i.e., $(\dPi_2,\dGamma_2)=(\zero,\zero)$. Hence, every such feasible variation $(\dPi_1\xol_1,\dGamma_1\xol_1)\not\equiv(\zero,\zero)$ is nontrivial only if $\xol_1(0)\neq\zero$ for which it always results $\Delta P^y(\cdot)>0$ based on \lemref{lem_Wo}. Eventually, we observe that a nontrivial variation for which $(\Q_{\text{\tiny$\nicefrac{1}{2}$}}\dPi,\Aol)$ is completely unobservable does not exist since $\C\dPi=\zero$ implies $(\dPi,\dGamma)=(\zero,\zero)$ by the analysis above. This completes the exhaustive case study and shows that $\Delta P^y(\cdot)>0$ for all nontrivial variations.
	
	Regarding necessity of \assref{ass_nonres_ua}, we suppose \eqref{eq_nonres_ua} is not satisfied for some $\lambdaol\in\sigma(\Aol)$ and $(\xs^*,\us^*)$ is unique. Then we find some $\CZ^{n+m}\ni\vec{\nu}\neq\zero$ such that $\Ros{\lambdaol,\C,\A,\B}\vec{\nu}=\zero$ from which we obtain $\big[ \sdPi^\tp \;\; \delta\vec{\gamma}^\tp \,\big]^\tp=\vec{\nu}$.  
	Without loss of generality, we assume that $\Aol$ is diagonal and $\lambdaol$ is the first element on the diagonal. Hence, we can construct the feasible variation $(\sdPi\overline{x}_1,\delta\vec{\gamma}\overline{x}_1)$ that is nontrivial if $\overline{x}_1(0)\neq0$ and for which $\Delta P^y\cd=0$ holds since $\C\sdPi\overline{x}_1(t)\equiv\zero$. This contradicts the uniqueness.
\end{proof}	

\remark \label{rem_DJ_ua}
Based on this proof, $\Delta P^y(\cdot)=0$ occurs only for stationary solutions $(\xs,\us)=(\xs^*+\dPi\xol,\us^*+\dGamma\xol)$ where $\C\dPi\xol\equiv\zero$ for the given $\xol_0$. Referring to \remref{rem_DJ_oa}, we easily see that then $\Delta J_T^y(\cdot)=0$ holds for all optimal solutions, too.
\remark \label{rem_alloc_ua}
\assref{ass_nonres_ua} requires that $\B$ has full rank. If instead $\rank{\B}<m$, without loss of generality, we equivalently rewrite system \eqref{eq_sys} with virtual inputs such that the associated new input matrix has full rank, apply \thmref{thm_ua} and use techniques of, e.g., \cite{Haerkegard2005}, to allocate the original control inputs.%

\subsection{Relation to Classical Infinite-Horizon LQ Tracking Problems}
\label{sec_rellqt}

In this section, we investigate that \oorref{oor} and \oorref{uor} are special cases of \lqtref{lqt}, and put them as well as \orref{or} in a greater context. This insight is useful when it suffices to satisfy the constraint \eqref{eq_constr} with (potentially arbitrarily) small errors or when the input energy should be considered in cost \eqref{eq_Jy} by a (potentially arbitrarily) small weight.

To this end, the cost \eqref{eq_Jlqt} in \lqtref{lqt} (for $\epsilon=1$) takes \eqref{eq_constr} implicitly into account by adding the error energy to \eqref{eq_Ju} which corresponds to an ``integral penalty function'', see \cite[Sec. 3.4]{Bryson1975}. Forcing the error to zero as ${\rho\rightarrow\infty}$ gives \oorref{oor} as a special case of \lqtref{lqt}.
 
Similarly, \oorref{uor} results from \lqtref{lqt} (for $\rho=1$) as ${\epsilon\rightarrow0}$. As in \cite[Ch. 17]{Saberi2000} , this cheap optimal control problem cannot be treated by known approaches (such as \cite{Anderson2007} 
and \cite{Pratti2008}) because the augmented system composed of \eqref{eq_linsys} and \eqref{eq_exo} is not stabilizable.

We consider a typical assumption in optimal tracking:
\assumption\label{ass_detect} The pair $(\C,\A)$ is detectable.

Let $\Q=\Q_{\text{\tiny$\nicefrac{1}{2}$}}^\tp\Q_{\text{\tiny$\nicefrac{1}{2}$}}\succ\zero$, then $(\Q_{\text{\tiny$\nicefrac{1}{2}$}}\C,\A)$ is detectable.

Following the discussion, we present our \textbf{third main result}.

\begin{theorem}\label{thm_lqt}
	Suppose that \assref{ass_detect} holds. It exists a unique optimal solution $\big(\mPi(\epsilon,\rho)\xol,\mGamma(\epsilon,\rho)\xol\big)$ to \lqtref{lqt} for every $\epsilon>0$ and $\rho>0$.
	\renewcommand{\theenumi}{\alph{enumi}}
	\begin{enumerate}
		\item \label{thm3_1} If \assref{ass_nonres} and the condition \eqref{eq_obsv_cond} are satisfied, then, as $\rho\to\infty$, $\Vert\mPi(1,\rho)-\mPi^\oa\Vert_F=\mathcal{O}(\nicefrac{1}{\rho})$ and $\Vert\mGamma(1,\rho)-\mGamma^\oa\Vert_F=\mathcal{O}(\nicefrac{1}{\rho})$ where $(\mPi^\oa,\mGamma^\oa)$ uniquely solves \oorref{oor}.
		\item \label{thm3_2} If \assref{ass_nonres_ua} holds, then $\Vert\mPi(\epsilon,1)-\mPi^\ua\Vert_F=\mathcal{O}(\epsilon)$ and $\Vert\mGamma(\epsilon,1)-\mGamma^\ua\Vert_F=\mathcal{O}(\epsilon)$ as $\epsilon\to0$ where $(\mPi^\ua,\mGamma^\ua)$ uniquely solves \oorref{uor}.
	\end{enumerate}
\end{theorem}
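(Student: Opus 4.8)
The plan is to treat \lqtref{lqt} as a classical infinite-horizon LQT problem and then perform a perturbation analysis in the two relevant limits. First I would establish existence and uniqueness of $\big(\mPi(\epsilon,\rho)\xol,\mGamma(\epsilon,\rho)\xol\big)$ for fixed $\epsilon,\rho>0$: under \assref{ass_stab} and \assref{ass_detect}, the pair $(\Q_{\text{\tiny$\nicefrac{1}{2}$}}\C,\A)$ is detectable, so the algebraic Riccati equation associated with the cost $\tfrac12\int(\ys-\yol)^\tp\rho\Qy(\ys-\yol)+\us^\tp\epsilon\R\us\,\dop t$ has a unique stabilizing solution, and the optimal stationary solution is characterized exactly by the triple $(\mPi,\mPic,\mGamma)$ solving \eqref{eq_reg1} together with \eqref{eq_nec}. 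Solvability of \eqref{eq_reg1}, \eqref{eq_nec} as a linear system in $(\mPi,\mPic,\mGamma)$ follows because \assref{ass_eig} places $\sigma(\Aol)$ on the imaginary axis while the stabilizing Riccati feedback moves the relevant closed-loop spectrum off it, so the associated Sylvester-type operator is invertible; uniqueness and optimality then follow from \lemref{lem_delta1J} and \lemref{lem_Wo} exactly as in the proofs of Theorems~\ref{thm_oa} and~\ref{thm_ua}, since $\rho\Qy\succ\zero$ and $\epsilon\R\succ\zero$ make both $\dJ{2}$ terms strictly positive on all nontrivial variations.

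Next, for part \enuref{thm3_1}, I would set $\epsilon=1$ and divide \eqref{eq_necc} by $\rho$. Writing $\wt{\mPic}\coloneqq\mPic/\rho$, the equations \eqref{eq_reg1}, \eqref{eq_necc}/$\rho$, \eqref{eq_necu} become a linear system in $(\mPi,\wt{\mPic},\mGamma)$ whose coefficient matrix depends on $1/\rho$ and which, at $1/\rho=0$, is exactly $\RGoa$ after the identification $\mGammac^\oa \leftrightarrow \Qy(\C\mPi+\Dd-\Col)\cdot(-1)$ — more precisely, setting $1/\rho=0$ forces $\mGamma=-\R^{-1}\B^\tp\mPic$ from \eqref{eq_necu}, forces \eqref{eq_oa2} from \eqref{eq_reg1}, and \eqref{eq_necc}/$\rho$ collapses to $\wt{\mPic}\Aol=-\A^\tp\wt{\mPic}+\C^\tp\wh{\mGammac}^\oa$ with $\wh{\mGammac}^\oa=-\rho\Qy(\C\mPi+\Dd-\Col)$ finite in the limit (it is the "multiplier" dual to the constraint). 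Under \assref{ass_nonres} and \eqref{eq_obsv_cond}, \corollref{coroll_nonres_oa} guarantees the limiting system has a unique solution, hence its coefficient matrix is invertible; by continuity of matrix inversion and Cramer's rule the perturbed solution satisfies $\mPi(1,\rho)=\mPi^\oa+\mathcal{O}(1/\rho)$ and likewise $\mGamma(1,\rho)=\mGamma^\oa+\mathcal{O}(1/\rho)$, which is the claimed $\Vert\cdot\Vert_F$ bound. Part \enuref{thm3_2} is the same argument with the roles reversed: fix $\rho=1$; the system \eqref{eq_reg1}, \eqref{eq_necc}, \eqref{eq_necu} depends affinely on $\epsilon$, and at $\epsilon=0$ it is precisely $\RGua$. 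By \lemref{lem_nonres_ua}, \assref{ass_nonres_ua} makes the $\epsilon=0$ system uniquely solvable (invertible coefficient matrix), so the implicit-function/Cramer argument again gives $\mPi(\epsilon,1)=\mPi^\ua+\mathcal{O}(\epsilon)$ and $\mGamma(\epsilon,1)=\mGamma^\ua+\mathcal{O}(\epsilon)$ as $\epsilon\to0$.

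The main obstacle I anticipate is the singular limit in part \enuref{thm3_2}: at $\epsilon=0$ the underlying LQ problem is cheap-control and the augmented $(\C,\A,\B)\oplus(\Aol)$ system is not stabilizable (as noted in \secref{sec_rellqt}), so one cannot simply invoke continuity of the Riccati solution in $\epsilon$ down to $\epsilon=0$ — the Riccati solution itself may blow up. The fix is to bypass the Riccati equation entirely and argue only at the level of the linear regulator-type equations \eqref{eq_reg1}–\eqref{eq_nec}, whose coefficient matrix is a genuine polynomial (in fact affine) in $\epsilon$ and stays bounded; one must verify that the $(\mPi,\mPic,\mGamma)$ extracted from the bounded Riccati-free system for $\epsilon>0$ really is the optimal stationary solution (this is where \lemref{lem_delta1J}/\lemref{lem_Wo} and the detectability assumption do the work) and that no component of the solution escapes to infinity as $\epsilon\to0$, which is exactly what unique solvability of $\RGua$ at $\epsilon=0$ prevents. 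A secondary point requiring care is making the $\mGammac^\oa$ identification in part \enuref{thm3_1} rigorous — i.e., showing the scaled multiplier $-\rho\Qy(\C\mPi(1,\rho)+\Dd-\Col)$ converges rather than merely staying bounded — which again follows once the limiting coefficient matrix is shown invertible via \corollref{coroll_nonres_oa}.
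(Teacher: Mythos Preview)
Your overall strategy---treat \eqref{eq_reg1} and \eqref{eq_nec} as a linear system with a parameter, show unique solvability at the limit via the respective regulator equations, and conclude $\mathcal{O}$-convergence by a perturbation/continuity argument---is precisely the paper's approach. For part~\enuref{thm3_2} your argument is essentially identical to the paper's: the system \eqref{eq_reg1}, \eqref{eq_necc} (with $\rho=1$), \eqref{eq_necu} is $\RGua$ perturbed affinely by $\epsilon\R\mGamma$ in the last equation, \lemref{lem_nonres_ua} gives invertibility at $\epsilon=0$, and the paper closes with its \lemref{lem_Axb} where you invoke Cramer/implicit-function; these are equivalent. Your discussion of why one must bypass the Riccati equation in the cheap-control limit is apt and again matches the paper.

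For part~\enuref{thm3_1}, however, your change of variables is wrong and the argument as written does not go through. With $\wt\mPic\coloneqq\mPic/\rho$ and unknowns $(\mPi,\wt\mPic,\mGamma)$, equation \eqref{eq_necu} (for $\epsilon=1$) reads $\zero=\R\mGamma+\rho\,\B^\tp\wt\mPic$, so the coefficient matrix contains $\rho$, not $1/\rho$, and blows up; if you instead divide \eqref{eq_necu} by $\rho$ as well, the limiting third equation becomes $\B^\tp\wt\mPic=\zero$, which together with the other two equations is \emph{not} $\RGoa$---in particular the constraint $\Col=\C\mPi+\Dd$ is absent and $\mGamma$ appears only through $\B\mGamma$ in \eqref{eq_reg1}, so unique solvability at the limit does not follow from \corollref{coroll_nonres_oa}. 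Your own parenthetical (``$\wh{\mGammac}^\oa=-\rho\Qy(\C\mPi+\Dd-\Col)$ finite in the limit'') is the right instinct but is inconsistent with the rescaling $\wt\mPic=\mPic/\rho$ one line earlier: the correct move, which the paper makes, is to keep $\mPic$ \emph{unscaled} and introduce the auxiliary variable $\mGammac\coloneqq-\rho\Qy(\C\mPi+\Dd-\Col)$ as a genuine unknown. Then \eqref{eq_necc} becomes $\mPic\Aol=-\A^\tp\mPic+\C^\tp\mGammac$ and the definition of $\mGammac$ rewrites as $\Col=\C\mPi+\Dd+\tfrac{1}{\rho}\Q^{-1}\mGammac$; the resulting system in $(\mPi,\mPic,\mGammac)$ is exactly $\RGoa$ perturbed by $\tfrac{1}{\rho}\Q^{-1}\mGammac$ in the third block, and now \corollref{coroll_nonres_oa} plus a Sylvester perturbation bound yields the claimed $\mathcal{O}(1/\rho)$.
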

\begin{proof}
	Under \assref{ass_stab}, \ref{ass_eig} and \ref{ass_detect}, it exists a unique triple $\big(\mPi(\epsilon,\rho),\mPic(\epsilon,\rho),\mGamma(\epsilon,\rho)\big)$ that solves the system of equations \eqref{eq_reg1} and \eqref{eq_nec} for all $\epsilon, \rho>0$ (see \cite[Thm. 1]{Bernhard2018a} for a proof). From this, we obtain the unique optimal solution $\big(\mPi(\epsilon,\rho)\xol,\mGamma(\epsilon,\rho)\xol\big)$ to \lqtref{lqt}. Basically implied by \cite{Bernhard2017b}, it can be proven similarly as \thmref{thm_oa}. 
	
	We consider \ref{thm3_1}), i.e., \lqtref{lqt} for $\epsilon=1$. By \eqref{eq_necu}, we replace $\mGamma(1,\rho)=-\R^{-1}\B^\tp\mPic(1,\rho)$ in \eqref{eq_reg1}. The resulting equation and \eqref{eq_necc} can be equivalently rewritten by introducing an auxiliary variable $\mGammac(\rho)\colon(0,\infty)\rightarrow\RZ^{p\times\nol}$ (dropping the arguments):%
	\begin{subequations} \label{eq_reg_oa_dist}\begin{align}
		\mPic\Aol&=-\A^\tp\mPic+\C^\tp\mGammac\\
		\mPi\Aol&=\A\mPi-\B\R^{-1}\B^\tp\mPic+\Ed \\
		\Col&=\C\mPi+\D_d+\tfrac{1}{\rho}\Q^{-1}\mGammac. \label{eq_output_sub}
		\end{align}\end{subequations}
	Clearly, the set \eqref{eq_reg_oa_dist} results from $\RGoa$ by disturbing \eqref{eq_oa3} by $\nicefrac{1}{\rho}\Q^{-1}\mGammac(\rho)$. Hence, we are able to apply \lemref{lem_Axb} which proves \ref{thm3_1}).  
	In view of \secref{sec_ua}, we regard the two sets: $\RGua$ and \eqref{eq_reg1}, \eqref{eq_necc} for $\rho=1$, \eqref{eq_necu} which is disturbed by $\epsilon\R\mGamma(\epsilon,1)$. Again, we can apply \lemref{lem_Axb} that proves \ref{thm3_2}).%
\end{proof}

 \section{Discussion}
\label{sec_disc}

In this section, we shortly discuss extensions of our main results, e.g., for systems with feedthrough or for general admissible solutions.

\subsection{Feedthrough and Additional State Penalty}
\label{sec_feedthrough}

So far, we disregarded a feedthrough: $\y=\C\x+\D\u+\Dd\xol$ for the purpose of a concise presentation. When $\D\neq\zero$, then we modify $\RGua$ by replacing \eqref{eq_ua2} and \eqref{eq_ua3} by:
\begin{align*}
\mPic^\ua\Aol&=-\A^\tp\mPic^\ua-\C^\tp\Qy(\C\mPi^\ua+\D\mGamma^\ua+\D_d-\Col)\textkom\\
\zero&=-\B^\tp\mPic^\ua-\D^\tp\Q(\C\mPi^\ua+\D\mGamma^\ua+\D_d-\Col).
\end{align*}
Occasionally, contributions like \cite{Krener1992,Serrani2012} consider an additional state penalty in \oorref{oor} by adding $\nicefrac{1}{2}\int_0^T \xs^\tp\Q_x\xs\dop t$ with $\Q_x\succeq\zero$ to cost \eqref{eq_Ju}.  
Then, we modify $\RGoa$ by replacing \eqref{eq_regoa} completely by
\begin{subequations}
	\begin{align}
	\mPic^\oa\Aol&=-\Q_x\mPi^\oa-\A^\tp\mPic^\oa+\C^\tp\mGammac^\oa\textkom \\
	\mPi^\oa\Aol&=\A\mPi^\oa+\B\mGamma^\oa+\Ed\textkom \label{eq_RG_D1}\\
	\Col&=\C\mPi^\oa+\D\mGamma^\oa+\Dd \label{eq_RG_D2}
	\end{align}
\end{subequations}
where it must hold: $\mGamma^\oa=\R^{-1}(-\B^\tp\mPic^\oa+\D^\tp\mGammac^\oa)$. We also substitute $\Ros{\lambdaol,\C,\A,\B}$ with $\Ros{\lambdaol,\C,\A,\B,\D}$ in \assref{ass_nonres} and~\ref{ass_nonres_ua}. With respect to the modified two sets of equations, all results on solvability (for every choice of $\Q_x\succeq\zero$) and on optimality of $(\mPi^\ua,\mGamma^\ua)$ and $(\mPi^\oa,\mGamma^\oa)$ in \secref{sec_oor} hold. Especially \lemref{lem_solv_regoa} holds, where $\RG$ for $\D\neq\zero$ equals \eqref{eq_RG_D1}, \eqref{eq_RG_D2} in the free variables $\mPi$, $\mGamma$. These facts can be checked by properly taking $\Q_x$ and $\D$ in the proofs in \secref{sec_oor} and \appref{app_B} into account.

\subsection{If Regulator Equations $\RG$ Have a Unique Solution}
\label{sec_squ}

\ldots\,given by $(\mPi,\mGamma)$ then it solves \oorref{oor} uniquely. This is a consequence of \lemref{lem_solv_regoa} (see also the part of the proof covering necessity), \thmref{thm_oa} and the fact that \eqref{eq_feasvaroa} is only satisfied for $(\dPi,\dGamma)=(\zero,\zero)$. The latter and \thmref{thm_ua} imply that $(\mPi,\mGamma)$ also solves \oorref{uor} uniquely. We remark that each solution of $\RG$ solves \oorref{uor} if there are more than one. Of course, $(\mPi,\mGamma)$ can also be obtained from \lqtref{lqt} referring to \thmref{thm_lqt}. Hence, in this case, \orref{or} is nothing else but a special case of \lqtref{lqt}.%

\subsection{General Admissible Solutions}
\label{sec_var}
We regard general admissible solutions of \eqref{eq_linsys} that include continuous and piecewise continuously differentiable $\xs(\cdot)$ (with arbitrary $\xs(0)$) and piecewise continuous $\us(\cdot)$ such that $\us(t)$ is bounded on each finite interval. Our optimality definition based on the power difference \eqref{eq_delta_power} can only distinguish differences $\Delta J_T\cd$ that grow at least linearly with $T$. In this more general context, a higher precision is desirable. Thus, we use $\Delta J_T\cd$ itself as a measure as $T\to\infty$.

Let us consider \lqtref{lqt} under the assumption that $(\C,\A)$ is completely observable. For a solution $(\mPi\xol,\mGamma\xol)$ obtained from \eqref{eq_reg1} and \eqref{eq_nec}, it holds $\lim_{T\to\infty}\Delta J_T(\xs,\us,1,\mPi\xol,\mGamma\xol)=\infty$ for any $(\xs,\us)$ such that $\limsup_{t\to\infty}\Vert\xs-\mPi\xol\Vert_2>0$, see \cite[Coroll. 10]{Bernhard2017b}. By \thmref{thm_lqt}, we expect similar results for our candidates with respect to \oorref{oor} and \oorref{uor}, respectively.

 In view of \oorref{oor}, a pair $(\mPi^\oa,\mGamma^\oa)$ given by $\RGoa$ satisfies $\lim_{T\to\infty}\Delta J^u_T(\xs,\us,1,\mPi^\oa\xol,\mGamma^\oa\xol)=\infty$ for any $(\xs,\us)$ such that \eqref{eq_constr} and $\limsup_{t\to\infty}\Vert\xs-\mPi^\oa\xol\Vert_2>0$ hold. This can be similarly proven as \cite[Thm. 9]{Bernhard2017b} if $(\C,\A)$ is completely observable. 
 
 With respect to \oorref{uor} and a pair $(\mPi^\ua,\mGamma^\ua)$ given by $\RGua$, it follows that $\lim_{T\to\infty}\Delta J^y_T(\xs,\us,1,\mPi^\ua\xol,\mGamma^\ua\xol)=\infty$ for any $(\xs,\us)$ such that $\Vert\xs(t)\Vert_2$ and $\Vert\us(t)\Vert_2$ are bounded for every $t\in[0,\infty)$ and $\limsup_{t\to\infty}\Vert\C(\xs-\mPi^\ua\xol)\Vert_2>0$. This can be proven by a contradiction argument using \textit{Barbalat's Lemma} \cite{Khalil2002}. 
 
In this sense, cf. \cite{Artstein1985}, $(\mPi^\oa\xol,\mGamma^\oa\xol)$ and $(\mPi^\ua\xol,\mGamma^\ua\xol)$ \textit{overtake} any feasible $(\xs,\us)$ that differs in the stationary behavior of the state and the output, respectively.
 Hence, our results show strong properties in a general infinite-horizon optimal tracking setup. This underlines that the control structure~\eqref{eq_u} in output regulation theory is not restrictive but rather necessary to obtain such a desirable performance.%

\subsection{If Assumption 2 Does Not Hold}
\label{sec_unstable}

This is true, e.g., for every nonconstant polynomial or unstable reference $\yol(t)$. Of course, it results $P(\cdot)=\lim_{T\to\infty}\nicefrac{J_T(\cdot)}{T}=\infty$ generally. Hence, we seek an \textit{overtaking optimal} stationary solution which overtakes any other stationary solution in the sense of \secref{sec_var}. But, it is easy to construct counter examples to illustrate that, in general, such a solution does not exist for our problems. 

For simplicity, we only give a counter example for \lqtref{lqt} and a polynomial reference. Consider $J_T(\cdot)=\int_0^T(x_\text{s}-\xoli_1)^2+u_\text{s}^2\dop t$ for $\dot x=x-u$ and $\xol^\tp=\ma \nicefrac{t^2}{2} & t & 1 \me$. 
For an admissible control $u_\text{s}=\mGamma\xol=\gamma_1\nicefrac{t^2}{2}+\gamma_2t+\gamma_3$, we obtain $x_\text{s}=\mPi\xol=\gamma_1\nicefrac{t^2}{2}+(\gamma_1+\gamma_2)t+(\gamma_1+\gamma_2+\gamma_3)$ from $\dot x_\text{s}=x_\text{s}-u_\text{s}$ where constant $\gamma_i\in\RZ$ are left to choose. By integration, $J_T(x_\text{s},u_\text{s})$ is a fifth-order polynomial in $T$, where the coefficients are functions of $\gamma_i$. Since $T$ can be arbitrarily large, the best one can do is to choose $\gamma_1$,  $\gamma_2$,  $\gamma_3$ such that the coefficients are minimized stepwise starting from the highest order. This forces $\gamma_1=0.5$ and $\gamma_2=-0.125$. Then the coefficient that depends on $\gamma_3$ and belongs to the highest order, which is $T^2$ in this case, is a linear function of $\gamma_3$. For large $T$, $J_T(x_\text{s},u_\text{s})$ behaves as a linear function of $\gamma_3$ and an overtaking optimal stationary solution cannot exist.

These considerations justify to require \assref{ass_eig} because we cannot expect to find adequate stationary solutions without it. 
\remark Nonetheless, when \assref{ass_eig} is violated as above, then our derived candidates might still be a favorable choice. We could verify this by checking if the candidate $\big(\mPi(\epsilon,\rho)\xol,\mGamma(\epsilon,\rho)\xol\big)$ for \lqtref{lqt} is a so-called \textit{agreeable plan} for $\x_0=\mPi(\epsilon,\rho)\xol_0$ (see \cite{Carlson1987} and \cite{Bernhard2017b} for details) if either $\rho$ is exceedingly large or $\epsilon$ is exceedingly small. This can be expected if $\Re{\lambdaol}=0$ holds for all $\lambdaol\in\sigma(\Aol)$.
 \section{Conclusion}
\label{sec_concl}

Under common assumptions, we derived new design methods for a trajectory tracking control \eqref{eq_u} for linear systems that are square, over-actuated or under-actuated. In this respect, we
contributed two new sets of regulator equations $\RGoa$ and $\RGua$ to the output regulation theory. By solving $\RGoa$, we easily obtain the solution $(\mPi,\mGamma)$ of the classical regulator equations $\RG$ that uses additional actuators most efficiently. If $\RG$ have no solution because exact tracking is infeasible, then $\RGua$ provide a pair $(\mPi,\mGamma)$ that optimally saves tracking error energy. 
Our thorough study of solvability conditions also yielded a significant insight: The classical output regulation problem (\orref{or}) can always be solved by using our new equations $\RGoa$ instead of the classical $\RG$.
Furthermore, we established a link to optimal tracking by showing that both optimal ORP (\oorref{oor} and \oorref{uor}) are in fact special cases of a classical infinite-horizon LQ tracking problem (\lqtref{lqt}). This is useful, e.g., if tracking with (arbitrarily) high precision is sufficient instead of exact tracking.

\appendices
\section{}\label{app_A}
\setcounter{equation}{0}
\renewcommand{\theequation}{A.\arabic{equation}}
\setcounter{lemma}{0}
\renewcommand*{\thelemma}{A.\arabic{lemma}}

The following lemmata are used to prove our main results.
\begin{lemma}\label{lem_Wo}
	For $\G\in\CZ^{n_G\times\nol}$ with $n_G\geq1$, consider the system $(\G,\Aol)$ for which some but not all eigenvalues of $\Aol$ are observable. Due to \assref{ass_eig}, let the exosystem \eqref{eq_exosys} be given by a diagonalization:
	\begin{equation}\label{eq_exo_decomp}
	\ma \dot\xol_1 \\ \dot\xol_2 \me = \ma \Aol_{11} & \zero \\ \zero & \Aol_{22} \me \ma \xol_1 \\ \xol_2 \me
	\end{equation}
	where $\Aol_{11}$ and $\Aol_{22}$ are diagonal such that $\G=\ma \G_1 & \zero \me$ and $(\G_1,\Aol_{11})$ is completely observable. For all $\xol_2(0)$, it holds
	\begin{subnumcases}{\label{eq_power_both}
		\lim_{T\to\infty}\frac{1}{T}\int_0^T \xol(t)^\tp\G^\tp\G\xol(t)\dop t}
		>0\;\text{ if}\;\;\xol_1(0)\neq\zero\textkom\hspace{10mm}\label{eq_power}\\
		=0\;\text{ if}\;\;\xol_1(0)=\zero. \label{eq_power_zero}
	\end{subnumcases}
\end{lemma}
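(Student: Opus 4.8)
The plan is to reduce to the observable subsystem and then evaluate the time‑average explicitly through a Fourier‑mode expansion. First I would note that, in the coordinates of \eqref{eq_exo_decomp} where $\G=\begin{bmatrix}\G_1 & \zero\end{bmatrix}$, the unobservable component $\xol_2$ drops out completely: $\G\xol(t)=\G_1\xol_1(t)$, so the integrand equals $\xol_1(t)^\tp\G_1^\tp\G_1\xol_1(t)=\Vert\G_1\xol_1(t)\Vert_2^2$, which manifestly does not depend on $\xol_2(0)$. This already disposes of \eqref{eq_power_zero}: if $\xol_1(0)=\zero$, then $\xol_1(t)\equiv\zero$ because $\dot\xol_1=\Aol_{11}\xol_1$ evolves autonomously, the integrand vanishes identically, and the limit is $0$.

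For \eqref{eq_power} I would take $\xol_1(0)\neq\zero$. Since $\Aol_{11}$ is diagonal and, by \assref{ass_eig}, all its eigenvalues are purely imaginary, I would write $\xol_1(t)=e^{\Aol_{11}t}\xol_1(0)=\sum_{\omega\in\Omega}e^{j\omega t}\vec{p}_\omega$, where $\Omega\coloneqq\{\omega\in\RZ:j\omega\in\sigma(\Aol_{11})\}$ is finite and $\vec{p}_\omega$ is the projection of $\xol_1(0)$ onto the $j\omega$‑eigenspace of $\Aol_{11}$. Hence $\G_1\xol_1(t)=\sum_{\omega\in\Omega}e^{j\omega t}\G_1\vec{p}_\omega$ and the integrand is a finite trigonometric polynomial, a sum of terms proportional to $e^{j\nu t}$ with $\nu\in\{\omega\pm\omega':\omega,\omega'\in\Omega\}$. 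The key computation is the elementary fact $\tfrac{1}{T}\int_0^T e^{j\nu t}\dop t\to 0$ for $\nu\neq 0$ and $\to 1$ for $\nu=0$; applying it term by term, every cross‑frequency contribution averages out and only the diagonal terms survive, giving
\[
\lim_{T\to\infty}\frac{1}{T}\int_0^T \xol(t)^\tp\G^\tp\G\xol(t)\dop t=\sum_{\omega\in\Omega}\Vert\G_1\vec{p}_\omega\Vert_2^2.
\]
In particular this limit always exists, being the Cesàro mean of a trigonometric polynomial, and it is a sum of nonnegative terms.

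It then remains to show the sum is strictly positive when $\xol_1(0)\neq\zero$. Suppose it were zero; then $\G_1\vec{p}_\omega=\zero$ for every $\omega\in\Omega$, so $\G_1\xol_1(t)=\sum_{\omega}e^{j\omega t}\G_1\vec{p}_\omega\equiv\zero$, i.e.\ $\xol_1(0)$ would be an unobservable initial state of the pair $(\G_1,\Aol_{11})$. Since $(\G_1,\Aol_{11})$ is completely observable by hypothesis, this forces $\xol_1(0)=\zero$, a contradiction; hence the limit is positive, which is \eqref{eq_power}.

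I expect the only point requiring care to be the bookkeeping in the Fourier expansion — tracking transposes versus conjugates so that the surviving diagonal terms genuinely reassemble into $\sum_{\omega}\Vert\G_1\vec{p}_\omega\Vert_2^2\geq 0$ (for a real $\G$ the surviving pairing is between $\omega$ and $-\omega$, using $\vec{p}_{-\omega}=\overline{\vec{p}_\omega}$ for real $\xol_1(0)$) — whereas the structural ingredients, namely the splitting off of $\xol_2$ and the observability argument, are routine.
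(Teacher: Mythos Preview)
Your argument is correct and follows essentially the same route as the paper's: both reduce to the observable block $(\G_1,\Aol_{11})$, dispose of the case $\xol_1(0)=\zero$ trivially, and then show that the time-average of a finite sum of pure sinusoids is strictly positive whenever the signal $\G_1\xol_1(\cdot)$ is not identically zero, with observability ruling out the latter. The paper invokes the observability Gramian to get $\G_1\xol_1(\cdot)\not\equiv\zero$ and then appeals to the ``well known'' signal-power of sums of cosines, whereas you compute the Fourier average explicitly and use observability at the end; these are the same ingredients in a slightly different order, and your caveat about the transpose/conjugate bookkeeping is exactly the point where the paper's real-cosine rewriting sidesteps the issue.
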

\vspace{2.0mm}
\begin{proof} 
	From the structure of $\Aol$ and $\G$, we observe that 
	$$\frac{1}{T}\int_0^T \xol_1(t)^\tp\G_1^\tp\G_1\xol_1(t) \dop t=\frac{1}{T}\xol_1(0)^\tp\W(T)\xol_1(0)$$ where $\W(T)$ is the observability gramian of the system $(\G_1,\Aol_{11})$. We trivially conclude \eqref{eq_power_zero}. With $(\G_1,\Aol_{11})$ completely observable, it holds $\W(T)\succ\zero$, $\forall T>0$ \cite{Sontag1998}, and the integrand satisfies $\xol_1(t)^\tp\G_1^\tp\G_1\xol_1(t)\not\equiv\zero$ for all $\xol_1(0)\neq\zero$. Taking \assref{ass_eig} into account, the components of $\xol(t)$ are sinusoids for all $\xol_0$. Hence, for every given $\xol_1(0)\neq\zero$, we can rewrite the left side of \eqref{eq_power_both}:
	\begin{equation*}
	\sum_{i}\bigg(\lim_{T\to\infty}\frac{1}{T}\int_0^T\bigg(\sum_j A_{ij}\cos(\omega_{ij}t+\alpha_{ij})\bigg)^2\dop t\bigg)\textkom
	\end{equation*}
	where 
	$A_{ij},\omega_{ij} ,\alpha_{ij}\in\RZ$ and $A_{ij}\neq0$. 
	Then, \eqref{eq_power} follows straightforwardly since it is well known that the signal power of such sums of sinusoids is larger than zero.
\end{proof}

The next lemma is a consequence of standard results:
\begin{lemma}\label{lem_Axb}
	Consider a system of $n\nol\geq1$ equations given by $\mA\,\vx+\vx\mAol=\vb$, where $\mA\in\RZ^{n\times n}$, $\mAol\in\RZ^{\nol\times \nol}$ and $\vb\in\RZ^{n\times \nol}$, and its disturbed version $(\mA+\epsilon\DA) \tvx+\tvx(\mAol+\epsilon\DAol)=\vb$, where $\DA\in\RZ^{n\times n}$, $\DAol\in\RZ^{\nol\times \nol}$ and $\epsilon>0$. Suppose $\vx\in\RZ^{n\times \nol}$ and $\tvx(\epsilon)\colon(0,\infty)\rightarrow\RZ^{n\times\nol}$ are their unique solutions (for every $\epsilon>0$). 
	Then, it holds $\Vert\tvx(\epsilon)-\vx\Vert_\text{F}=\mathcal{O}(\epsilon)$ as $\epsilon\to0$.
\end{lemma}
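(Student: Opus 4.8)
The plan is to reduce everything to the vectorized form, where both equations become ordinary linear systems, and then invoke continuity/differentiability of the matrix inverse. First I would apply the $\text{vec}(\cdot)$ operator together with the identity $\text{vec}(\mA\vx+\vx\mAol) = (\I\otimes\mA + \mAol^\tp\otimes\I)\,\text{vec}(\vx)$. Writing $\M \coloneqq \I\otimes\mA + \mAol^\tp\otimes\I \in\RZ^{n\nol\times n\nol}$, $\M_\epsilon \coloneqq \I\otimes(\mA+\epsilon\DA) + (\mAol+\epsilon\DAol)^\tp\otimes\I$, $\vec{b}\coloneqq\text{vec}(\vb)$, the nominal equation becomes $\M\,\text{vec}(\vx)=\vec{b}$ and the disturbed one $\M_\epsilon\,\text{vec}(\tvx)=\vec{b}$. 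By hypothesis both have unique solutions for the relevant $\epsilon$, so $\M$ is invertible and $\M_\epsilon$ is invertible for every $\epsilon>0$. Note $\M_\epsilon = \M + \epsilon\,\mTheta$ with $\mTheta \coloneqq \I\otimes\DA + \DAol^\tp\otimes\I$ a fixed matrix independent of $\epsilon$.

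Next I would quantify the perturbation of the solution. Since $\M$ is invertible, for $\epsilon$ small enough $\Vert\epsilon\M^{-1}\mTheta\Vert_2<1$, so $\M_\epsilon = \M(\I+\epsilon\M^{-1}\mTheta)$ is invertible with $\M_\epsilon^{-1} = (\I+\epsilon\M^{-1}\mTheta)^{-1}\M^{-1}$, and the Neumann series gives $\Vert\M_\epsilon^{-1}\Vert_2 \le \Vert\M^{-1}\Vert_2/(1-\epsilon\Vert\M^{-1}\mTheta\Vert_2)$, which is bounded uniformly in $\epsilon$ on a punctured neighborhood of $0$. Subtracting the two vectorized equations, $\M_\epsilon(\text{vec}(\tvx)-\text{vec}(\vx)) = (\M-\M_\epsilon)\text{vec}(\vx) = -\epsilon\,\mTheta\,\text{vec}(\vx)$, hence $\text{vec}(\tvx)-\text{vec}(\vx) = -\epsilon\,\M_\epsilon^{-1}\mTheta\,\text{vec}(\vx)$ and therefore
\begin{equation*}
\Vert\text{vec}(\tvx(\epsilon))-\text{vec}(\vx)\Vert_2 \le \epsilon\,\Vert\M_\epsilon^{-1}\Vert_2\,\Vert\mTheta\Vert_2\,\Vert\text{vec}(\vx)\Vert_2 \le \alpha\,\epsilon
\end{equation*}
for a constant $\alpha$ and all sufficiently small $\epsilon>0$. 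Since $\Vert\cdot\Vert_\text{F}$ on matrices equals $\Vert\cdot\Vert_2$ on the vectorized form, this is exactly $\Vert\tvx(\epsilon)-\vx\Vert_\text{F}=\mathcal{O}(\epsilon)$ as $\epsilon\to0$.

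There is essentially no hard obstacle here; the lemma is, as the paper says, "a consequence of standard results." The only points requiring a little care are: (i) justifying that $\M$ is invertible — this follows because the nominal equation is assumed to have a \emph{unique} solution, and a Sylvester-type linear map has a unique solution iff it is bijective; (ii) making sure the bound on $\Vert\M_\epsilon^{-1}\Vert_2$ is uniform for small $\epsilon$, which the Neumann-series estimate handles; and (iii) noting that $\tvx(\epsilon)$ as given is the unique solution, so the displayed difference formula genuinely describes it rather than some other solution branch. An alternative, even shorter route is to observe that $\epsilon\mapsto\M_\epsilon$ is affine hence smooth, $\M_0=\M$ is invertible, so by the smoothness of matrix inversion $\epsilon\mapsto\M_\epsilon^{-1}$ is $C^\infty$ near $0$, whence $\epsilon\mapsto\text{vec}(\tvx(\epsilon))=\M_\epsilon^{-1}\vec{b}$ is differentiable at $0$ and in particular Lipschitz there, giving the $\mathcal{O}(\epsilon)$ bound immediately.
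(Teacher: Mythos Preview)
Your proof is correct and follows essentially the same route as the paper: vectorize via the Kronecker sum $\mAol^\tp\oplus\mA$, then use a Neumann-series perturbation bound on the inverse to obtain the $\mathcal{O}(\epsilon)$ estimate (the paper cites a standard fact from Bernstein's matrix compendium for this bound, which is exactly the inequality you derive). Your alternative differentiability argument is a nice shortcut but not what the paper does.
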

\begin{proof}
	By using the Kronecker sum $\oplus$ and column-stacking operator $\text{vec}\cd$, both systems can be equivalently rewritten, e.g., $(\mAol^\tp\oplus\mA)\text{vec}(\vx)=\text{vec}(\vb)$. Due to the uniqueness of the solutions, which implies invertibility of $\mAol^\tp\oplus\mA$, we exploit \cite[Fact 9.15.2]{Bernstein2009} 
	to conclude  that if $\epsilon<\nicefrac{1}{\Vert\txtfor{(\mAol^\tp\oplus\mA)^{-1}(\DAol^\tp\oplus\DA)}\Vert_2}$, then
	\begin{equation}\label{eq_bound_Axb}
	\hspace{-3.0mm}\Vert\text{vec}(\txtfor{\tvx(\epsilon)\hspace*{-.75mm}-\hspace*{-.75mm}\vx})\Vert_2\leq\epsilon\,\tfrac{\Vert\txtfor{(\mAol^\tp\oplus\mA)^{-1}}\Vert_2\Vert\txtfor{(\DAol^\tp\oplus\DA)}\Vert_2\Vert\text{vec}(\txtfor{\vx})\Vert_2}{1-\epsilon\txtfor{\Vert(\mAol^\tp\oplus\mA)^{-1}(\DAol^\tp\oplus\DA)}\Vert_2}\hspace{-.5mm}
	\end{equation}
	holds. This implies the big $\mathcal{O}$ notation with $\Vert\cdot\Vert_\text{F}=\Vert\text{vec}\cd\Vert_2$.
\end{proof}
\section{}
\label{app_B}

Here, we present the proofs of \lemref{lem_solv_regoa} and~\ref{lem_nonres_ua}. First, 
let us define the system matrices of the system \eqref{eq_sys} by $\Ross{s}\coloneqq\Ros{s,\C,\A,\B}$ and of the costate system associated with \eqref{eq_oa1} for an artificial output $-\B^\tp\phiv$ by $\Rosc{s}\coloneqq\Ros{-s,-\B^\tp,\A^\tp,-\C^\tp}$. By merging both, the \textit{Hamiltonian} system associated with $\RGoa$ is obtained. After simple manipulations, its system matrix can be given by%
\begin{equation*}
	\Rosh{s}\coloneqq\ma \zero & -s\I-\A^\tp & \C^\tp \\
							s\I-\A & \B \R^{-1} \B^\tp & \zero\\
							\C & \zero & \zero \me.
\end{equation*}

\begin{proof}[Proof of \lemref{lem_solv_regoa}]
From a triple $(\mPi^\oa,\mPic^\oa,\mGammac^\oa)$ solving $\RGoa$ we construct the pair $(\mPi^\oa,-\R^{-1}\B^\tp\mPic^\oa)$ that solves \eqref{eq_reg} due to \eqref{eq_oa2}, \eqref{eq_oa3} which verifies necessity.

Before we show sufficiency, let us reformulate the sets of equations in question.  Due to \assref{ass_eig}, assume that $\Aol$ is diagonal without loss of generality. By \cite[Thm. 1.9]{Huang2004}, it is well known that solving $\RG$  is equivalent to solving $\vec{\beta}_i=\Ross{\lambdaol_i}\zsi$ for $\zsi\in\CZ^{n+m}$ for each $\lambdaol_i \in \sigma\left(\Aol\right)$, $i=1,\ldots,\nol$. Each $\vec{\beta}_i\in\CZ^{n+p}$ depends on $\Ed\vec{v}_i$, $\Col\vec{v}_i$ and $\Dd\vec{v}_i$ with the eigenvector $\vec{v}_i\in\CZ^{\nol}$ of $\Aol$ associated with the eigenvalue $\lambdaol_i$. Accordingly, solving $\RGoa$ is equivalent to solving $\big[ \zero \;\; \vec{\beta}_i^\tp \,\big]^\tp=\Rosh{\lambdaol_i}\zhi$  for $\zhi\in\CZ^{2n+p}$ and $\forall i=1,\ldots,\nol$.

In order to prove the sufficiency, it clearly suffices to show that for all $\lambdaol\in\sigma\left(\Aol\right)$ the equation $\big[\zero\;\; \vec{\beta}^\tp \,\big]^\tp =\Rosh{\lambdaol}\zh$ has a solution if $\vec{\beta} =\Ross{\lambdaol}\zs$ has a solution for some $\betav\in\CZ^{n+p}$. By \cite[Fact 2.10.6]{Bernstein2009}, the latter is true if and only if $\vec{\nu}_\s^\hp\betav=\zero$ for all $\vec{\nu}_\s\in\lnulls=\{\vec{\nu}\in\CZ^{n+p}\,\vert\,\zero=\vec{\nu}^\hp\Ross{\lambdaol}\}$. 
Now, let us consider the set
\begin{multline*}
\set{S}_{\lambdaol}=\left\{ \widehat{\vec{\nu}}\coloneqq\ma \vec{\nu}_\ct \\ \vec{\nu}_\s \me \in \CZ^{2n+p} \;\bigg\vert\;
\vec{\nu}_\s\in\lnulls\right\}.
\end{multline*}
Suppose $\lnullh\subset\set{S}_{\lambdaol}$ holds. By this assumption, we find 
$$\widehat{\vec{\nu}}^\hp\ma \zero \\ \vec{\beta} \me=\ma \vec{\nu}_\ct \\ \vec{\nu}_\s \me^\hp \ma \zero \\ \vec{\beta} \me=\vec{\nu}_\s^\hp\betav=\zero\textkom \;\;\forall \widehat{\vec{\nu}} \in \set{S}_{\lambdaol}\textkom$$
 which implies that $\ma \zero & \vec{\beta}^\tp \me^\tp =\Rosh{\lambdaol}\zh$ has a solution if a solution of $\vec{\beta} =\Ross{\lambdaol}\zs$ exists.

To complete the proof, it only remains to show that $\forall \lambdaol\in\sigma(\Aol)$ $\lnullh\subset\set{S}_{\lambdaol}$ holds indeed. Let $\widehat{\vec{\nu}}=\big[ \vec{\nu}_\ct^\hp \;\; \vec{\nu}_{\s1}^\hp \;\; \vec{\nu}_{\s2}^\hp \big]$, this is true if $\forall \lambdaol\in\sigma(\Aol)$: $\widehat{\vec{\nu}}^\hp\Rosh{\lambdaol}=\zero$ implies $\big[ \vec{\nu}_{\s1}^\hp \;\; \vec{\nu}_{\s2}^\hp \big]\Ross{\lambdaol}=\zero$. By introducing $\widetilde{\vec{\nu}}_{\s1}^\hp=-\vec{\nu}_{\s1}^\hp\B\R^{-1}$ with $\widetilde{\vec{\nu}}_{\s1}\in\CZ^m$, we may rewrite $\widehat{\vec{\nu}}^\hp\Rosh{\lambdaol}=\zero$ equivalently:%
\begin{subequations}\label{eq_ham_decomp}\begin{align}
	\ma \vec{\nu}_\ct^\hp & \widetilde{\vec{\nu}}_{\s1}^\hp \me \Rosc{\lambdaol} &= \zero\textkom \label{eq_ham_costate}\\
	\ma \vec{\nu}_{\s1}^\hp & \vec{\nu}_{\s2}^\hp \me \Ross{\lambdaol} &= \ma \zero & \widetilde{\vec{\nu}}_{\s1}^\hp\R \me \label{eq_ham_constr},
	\end{align}\end{subequations}
which corresponds to a decomposition of the \textit{Hamiltonian} system into a series of costate system and original system.

Since $\Rosc{\lambdaol}=\Ross{\lambdaol}^\hp$ $\forall \lambdaol\in\sigma(\Aol)$, due to $\lambdaol\in j\RZ$ based on \assref{ass_eig}, we find that $\big[\vec{\nu}_\ct^\hp\;\;\widetilde{\vec{\nu}}_{\s1}^\hp\big]^{\hp} \in \nulls\big(\Ross{\lambdaol}\big)$ by taking the conjugate transpose of \eqref{eq_ham_costate}. Thus, we  obtain from \eqref{eq_ham_constr}:
$$\ma \zero & \widetilde{\vec{\nu}}_{\s1}^\hp\R \me\ma \vec{\nu}_\ct \\ \widetilde{\vec{\nu}}_{\s1} \me=\widetilde{\vec{\nu}}_{\s1}^\hp\R\widetilde{\vec{\nu}}_{\s1}=\zero.$$
Hence, \eqref{eq_ham_decomp} necessarily implies $\widetilde{\vec{\nu}}_{\s1}=\zero$ because of $\R\succ\zero$ and $\ma \vec{\nu}_{\s1}^\hp & \vec{\nu}_{\s2}^\hp \me \Ross{\lambdaol} = \zero$ follows from \eqref{eq_ham_constr} as desired.
\end{proof}

Before we proceed to prove \lemref{lem_nonres_ua}, we define the system matrix of the \textit{Hamiltonian system} associated with $\RGua$ by
\begin{equation*}
\Roshu{s}\coloneqq\ma \zero & s\I-\A & \B \\
-s\I-\A^\tp & -\C^\tp \Qy \C & \zero\\
-\B^\tp & \zero & \zero \me.
\end{equation*}

\begin{proof}[Proof of \lemref{lem_nonres_ua}]
	By using similar techniques as in the proof of \lemref{lem_solv_regoa}, showing that $\RGua$ always has a \textit{unique} solution is equivalent to showing that a \textit{unique} solution $\zh\in\CZ^{2n+m}$ to $\big[ \vec{\beta}_\s^\tp \;\; \vec{\beta}_\ct^\tp\C \;\; \zero \big]^\tp=\Roshu{\lambdaol}\zh$ exists for all $\lambdaol \in \sigma\left(\Aol\right)$, all $\vec{\beta}_\s\in\CZ^{n}$ and all $\vec{\beta}_\ct\in\CZ^{p}$. This is true if and only if $\Roshu{\lambdaol}$ has full rank $\forall\lambdaol\in\sigma({\Aol})$, i.e., $\Roshu{\lambdaol}\widehat{\vec{\nu}}=\zero$ with $\widehat{\vec{\nu}}\in\CZ^{2n+m}$ admits only a trivial solution $\widehat{\vec{\nu}}=\zero$. 
	Let $\widehat{\vec{\nu}}^\tp=\ma \vec{\nu}_\s^\tp & \vec{\nu}_{\ct1}^\tp & \vec{\nu}_{\ct2}^\tp \me$, we rewrite $\Roshu{\lambdaol}\widehat{\vec{\nu}}=\zero$ by a similar decomposition as in the proof above:
	\begin{subequations}\label{eq_ham_decomp_ua}\begin{align}
		\Ross{\lambdaol} \ma \vec{\nu}_{\ct1} \\ \vec{\nu}_{\ct2} \me &= \ma \zero \\ -\Qy^{-1}\widetilde{\vec{\nu}}_{\ct1} \me\textkom \label{eq_ham_costate_ua}\\
		 \Rosc{\lambdaol}\ma \vec{\nu}_{\s} \\ \widetilde{\vec{\nu}}_{\ct1} \me &= \zero \label{eq_ham_constr_ua}
		\end{align}\end{subequations}
	where $\widetilde{\vec{\nu}}_{\ct1}\in\CZ^p$ such that $\widetilde{\vec{\nu}}_{\ct1}=-\Qy\C\vec{\nu}_{\ct1}$.
	From \eqref{eq_ham_constr_ua} and the fact that $\Rosc{\lambdaol}^\hp=\Ross{\lambdaol}$ $\forall\lambdaol\in\sigma(\Aol)$ due to \assref{ass_eig}, we find that $\ma \vec{\nu}_{\s}^\hp & \widetilde{\vec{\nu}}_{\ct1}^\hp \me\Ross{\lambdaol} = \zero$. Then, \eqref{eq_ham_costate_ua} has a solution only if
	\begin{equation*}
		\ma \vec{\nu}_{\s}^\hp & \widetilde{\vec{\nu}}_{\ct1}^\hp \me\ma \zero \\ -\Qy^{-1}\widetilde{\vec{\nu}}_{\ct1} \me=-\widetilde{\vec{\nu}}_{\ct1}^\hp\Qy^{-1}\widetilde{\vec{\nu}}_{\ct1}=\zero
	\end{equation*}	
	holds. Clearly, this requires $\widetilde{\vec{\nu}}_{\ct1}=\zero$ and, consequently,
	$$\Rosc{\lambdaol}\ma \vec{\nu}_{\s} \\ \zero \me =\zero\;\text{ and }\;\Ross{\lambdaol} \ma \vec{\nu}_{\ct1} \\ \vec{\nu}_{\ct2} \me=\zero.$$ 
	Since $-\lambdaol^\hp=\lambdaol\in j\RZ$ holds for all $\lambdaol\in\sigma(\Aol)$ due to \assref{ass_eig}, the first equation gives $\vec{\nu}_{\s}^\hp\ma \lambdaol\I-\A & -\B \me=\zero$. This forces $\vec{\nu}_{\s}=\zero$ by \assref{ass_stab}. The second equation admits only the trivial solution $\vec{\nu}_{\ct1}=\zero$ and $\vec{\nu}_{\ct2}=\zero$ $\forall\lambdaol \in \sigma\left(\Aol\right)$ if and only if \assref{ass_nonres_ua} holds.
\end{proof}

\ifCLASSOPTIONcaptionsoff
\newpage
\fi

\bibliographystyle{IEEEtranS} 
\bibliography{literature}

\end{document}